\newcommand{\set}[1]{\{#1\}}
\newcommand{\tuple}[1]{{\langle #1 \rangle}}
\newcommand{\card}[1]{{|#1|}}
\newcommand{\ldot}{\mathpunct{.}}
\newcommand{\pow}[1]{2^{#1}}
\newcommand{\<}{\subseteq}
\newcommand{\cupdot}{\mathbin{\dot\cup}}
\newcommand{\pspace}{\textsc{PSpace}}
\newcommand{\threeexptime}{\textsc{3ExpTime}}
\newcommand{\ltl}{\text{LTL}}
\newcommand{\hyperltl}{\text{HyperLTL}}
\newcommand{\secltl}{\text{SecLTL}}
\newcommand{\bool}{\mathbb{B}}
\newcommand{\nat}{\mathbb{N}}
\newcommand{\lang}{\mathcal{L}}
\newcommand{\ap}{\text{AP}}
\renewcommand{\models}{\vDash}
\newcommand{\nmodels}{\nvDash}
\newcommand{\cexpaths}{\mathcal{P}}
\newcommand{\ucw}{\mathcal{A}}
\newcommand{\zip}{\mathit{zip}}
\newcommand{\unzip}{\mathit{unzip}}
\newcommand{\U}{\Until}
\newcommand{\X}{\Next}
\newcommand{\G}{\Globally}
\newcommand{\F}{\Eventually}
\newcommand{\R}{\Release}
\newcommand{\W}{\WUntil}
\newcommand{\true}{\mathit{true}}
\newcommand{\btrue}{\top}
\newcommand{\pathassign}{\Pi}
\newcommand{\pathvars}{\mathcal{V}}
\newcommand{\dep}[2]{D_{#1 \mapsto #2}}
\newcommand{\collapse}{\mathit{collapse}}
\newcommand{\prj}{\mathit{prj}}
\newcommand{\strat}[2]{(2^{#1})^* \rightarrow 2^{#2}}
\newcommand{\fun}[2]{#1 \rightarrow #2}
\newcommand{\tsys}{\mathcal{S}} 
\newcommand{\arch}{A}
\newcommand{\penv}{{p_\textit{env}}}
\newcommand{\tool}{\text{BoSyHyper}}
\title{Synthesizing Reactive Systems from Hyperproperties\thanks{Supported by the European Research Council (ERC) Grant OSARES (No.\ 683300).}}
\author{Bernd Finkbeiner \and Christopher Hahn \and Philip Lukert \and Marvin Stenger \and Leander Tentrup}
\institute{Reactive Systems Group\\Saarland University\\\email{lastname@react.uni-saarland.de}}
\begin{document}

\maketitle

\begin{abstract}
  We study the reactive synthesis problem for hyperproperties given as formulas of the temporal logic HyperLTL. Hyperproperties generalize trace properties, i.e., sets of traces, to \emph{sets of sets} of traces. Typical examples are information-flow policies like noninterference, which stipulate that no sensitive data must leak into the public domain. Such properties cannot be expressed in standard linear or branching-time temporal logics like LTL, CTL, or CTL$^*$.
  We show that, while the synthesis problem is undecidable for full HyperLTL, it remains decidable for the $\exists^*$, $\exists^*\forall^1$, and the
$\mathit{linear}\;\forall^*$ fragments. Beyond these fragments, the synthesis problem immediately becomes undecidable. For universal HyperLTL, we present a semi-decision procedure that constructs implementations and counterexamples up to a given bound. 
  We report encouraging experimental results obtained with a prototype implementation on example specifications with hyperproperties like symmetric responses, secrecy, and information-flow.
\end{abstract}

\section{Introduction}


\emph{Hyperproperties}~\cite{journals/jcs/ClarksonS10} generalize
trace properties in that they not only check the correctness of
\emph{individual} computation traces in isolation, but relate
\emph{multiple} computation traces to each other.
$\hyperltl$~\cite{conf/post/ClarksonFKMRS14} is a logic for expressing
temporal hyperproperties, by extending linear-time temporal logic
(LTL) with \emph{explicit} quantification over traces.  $\hyperltl$
has been used to specify a variety of information-flow and security
properties.  Examples include classical properties like
non-interference and observational determinism, as well as
quantitative information-flow properties, symmetries in hardware
designs, and formally verified error correcting
codes~\cite{conf/cav/FinkbeinerRS15}.  For example, observational
determinism can be expressed as the $\hyperltl$
formula
$
\forall \pi \forall \pi'\ldot \G ( I_\pi = I_{\pi'} ) \rightarrow \G
( O_\pi = O_{\pi'} ),  
$
stating that, for every pair of traces, if the
observable inputs are the same, then the observable outputs must be
same as well.  While the
satisfiability~\cite{conf/concur/FinkbeinerH16}, model checking~\cite{conf/post/ClarksonFKMRS14,conf/cav/FinkbeinerRS15}, and
runtime verification~\cite{conf/csfw/AgrawalB16,conf/rv/FinkbeinerHST17} problem for $\hyperltl$ has been studied, the
\emph{reactive synthesis} problem of $\hyperltl$ is, so far, still
open.

In reactive synthesis, we automatically construct an implementation
that is guaranteed to satisfy a given specification. A fundamental
difference to verification is that there is no human programmer
involved: in verification, the programmer would first produce an
implementation, which is then verified against the specification. In
synthesis, the implementation is directly constructed from the
specification. Because there is no programmer, it is crucial that the
specification contains \emph{all} desired properties of the
implementation: the synthesized implementation is guaranteed to
satisfy the given specification, but nothing is guaranteed beyond
that. The added expressive power of HyperLTL over LTL is very
attractive for synthesis: with synthesis from hyperproperties, we can
guarantee that the implementation does not only accomplish the desired
functionality, but is also free of information leaks, is symmetric, is
fault-tolerant with respect to transmission errors, etc.

More formally, the reactive synthesis problem asks for a
\emph{strategy}, that is a tree branching on environment inputs whose
nodes are labeled by the system output. Collecting the inputs and
outputs along a branch of the tree, we obtain a trace.  If the set of
traces collected from the branches of the strategy tree satisfies the
specification, we say that the strategy \emph{realizes} the
specification.  The specification is \emph{realizable} iff there
exists a strategy tree that realizes the specification.
With LTL specifications, we get trees where the trace on each
individual branch satisfies the LTL formula. With HyperLTL, we
additionally get trees where the traces between different branches are
in a specified relationship.  This is dramatically more
powerful.

Consider, for example, the well-studied \emph{distributed} version of
the reactive synthesis problem, where the system is split into a set
of processes, that each only see a subset of the inputs. The
distributed synthesis problem for LTL can be expressed as the standard
(non-distributed) synthesis problem for HyperLTL, by adding for each
process the requirement that the process output is
\emph{observationally deterministic} in the process input.  HyperLTL
synthesis thus subsumes distributed synthesis.  The information-flow
requirements realized by HyperLTL synthesis can, however, be much more
sophisticated than the observational determinism needed for
distributed synthesis. Consider, for example, the \emph{dining
  cryptographers} problem~\cite{journals/cacm/Chaum85}: three cryptographers $C_a,C_b,$ and
$C_c$ sit at a table in a restaurant having dinner and either one of
cryptographers or, alternatively, the NSA must pay for their meal.  Is
there a protocol where each cryptographer can find out whether it was
a cryptographer who paid or the NSA, but cannot find out which
cryptographer paid the bill?

Synthesis from LTL formulas is known to be decidable in doubly
exponential time.  The fact that the distributed synthesis problem is
undecidable~\cite{conf/focs/PnueliR90} immediately eliminates the hope for a similar general
result for HyperLTL. However, since LTL is obviously a fragment of
HyperLTL, this immediately leads to the question whether the synthesis
problem is still decidable for fragments of
HyperLTL that are close to LTL but go beyond LTL: when exactly does the synthesis
problem become undecidable? From a more practical point of view,
the interesting question is whether semi-algorithms for distributed
synthesis~\cite{journals/sttt/FinkbeinerS13,conf/tacas/FaymonvilleFRT17}, which have been successful in constructing distributed
systems from LTL specifications despite the undecidability of the general problem,
can be extended to HyperLTL?

In this paper, we answer the first question by studying the
$\exists^*$, $\exists^*\forall^1$, and the
$\mathit{linear}\;\forall^*$ fragment. We show that the synthesis
problem for all three fragments is decidable, and the problem becomes
undecidable as soon as we go beyond these fragments. In particular,
the synthesis problem for the full $\forall^*$ fragment, which includes
observational determinism, is undecidable.

We answer the second question by studying the \emph{bounded} version
of the synthesis problem for the $\forall^*$ fragment. In order to detect realizability, we ask
whether, for a universal HyperLTL formula $\varphi$ and a given bound
$n$ on the number of states, there exists a representation of the
strategy tree as a finite-state machine with no more than $n$ states
that satisfies $\varphi$. To detect unrealizability, we
check whether there exists a counterexample to realizability
of bounded size. 
We show that both checks can be effectively reduced to
SMT solving.

\subsubsection{Related Work.}

  


$\hyperltl$~\cite{conf/post/ClarksonFKMRS14} is a successor of the temporal logic $\secltl$~\cite{conf/vmcai/DimitrovaFKRS12} used to characterize temporal information-flow.
The model-checking~\cite{conf/post/ClarksonFKMRS14,conf/cav/FinkbeinerRS15}, satisfiability~\cite{conf/concur/FinkbeinerH16}, monitoring problem~\cite{conf/csfw/AgrawalB16,conf/rv/FinkbeinerHST17}, and the first-order extension~\cite{conf/stacs/Finkbeiner017} of $\hyperltl$ has been studied before.
To the best of the authors knowledge, this is the first work that considers the synthesis problem for temporal hyperproperties.
We base our algorithms on well-known synthesis algorithms such as bounded synthesis~\cite{journals/sttt/FinkbeinerS13} that itself is an instance of Safraless synthesis~\cite{conf/focs/KupfermanV05} for $\omega$-regular languages.
Further techniques that we adapt for hyperproperties are lazy synthesis~\cite{conf/vmcai/FinkbeinerJ12} and the bounded unrealizability method~\cite{conf/tacas/FinkbeinerT14,journals/corr/FinkbeinerT15}.

Hyperproperties~\cite{journals/jcs/ClarksonS10} can be seen as a unifying framework for many different properties of interest in multiple distinct areas of research.
Information-flow properties in security and privacy research are hyperproperties~\cite{conf/post/ClarksonFKMRS14}.
$\hyperltl$ subsumes logics that reason over knowledge~\cite{conf/post/ClarksonFKMRS14}.
Information-flow in distributed systems is another example of hyperproperties, and the $\hyperltl$ realizability problem subsumes both the distributed synthesis problem~\cite{conf/focs/PnueliR90,conf/lics/FinkbeinerS05} as well as synthesis of fault-tolerant systems~\cite{journals/corr/FinkbeinerT15}.
In circuit verification, the semantic independence of circuit output signals on a certain set of inputs, enabling a range of potential optimizations, is a hyperproperty.


\section{Preliminaries}

\paragraph{HyperLTL.}

$\hyperltl$~\cite{conf/post/ClarksonFKMRS14} is a temporal logic for specifying hyperproperties.
It extends $\ltl$ by quantification over trace variables $\pi$ and a method to link atomic propositions to specific traces.
The set of trace variables is $\pathvars$.
Formulas in $\hyperltl$ are given by the grammar
\begin{align*}
\varphi &{}\Coloneqq \forall\pi\ldot\varphi \mid \exists\pi\ldot\varphi \mid \psi \enspace, \text{ and}\\
\psi &{}\Coloneqq a_\pi \mid \neg\psi \mid \psi\lor\psi \mid \X\psi \mid \psi\U\psi \enspace,
\end{align*}
where $a \in \ap$ and $\pi \in \pathvars$.
The alphabet of a $\hyperltl$ formula is $2^\mathit{AP}$.
We allow the standard boolean connectives $\wedge$, $\rightarrow$, $\leftrightarrow$ as well as the derived $\ltl$ operators release $\varphi \R \psi \equiv \neg(\neg\varphi \U \neg\psi)$, eventually $\F \varphi \equiv \true \U \varphi$, globally $\G \varphi \equiv \neg \F \neg \varphi$, and weak until $\varphi \W \psi \equiv \G \varphi \lor (\varphi \U \psi)$.

The semantics is given by the satisfaction relation $\models_T$ over a set of traces $T \subseteq (2^\ap)^\omega$.
We define an assignment $\pathassign : \pathvars \to (2^\ap)^\omega$ that maps trace variables to traces. $\pathassign[i,\infty]$ is the trace assignment that is equal to $\pathassign(\pi)[i,\infty]$ for all $\pi$ and denotes the assignment where the first $i$ items are removed from each trace.

$
\begin{array}{ll}
  \pathassign \models_T a_\pi         \qquad \qquad & \text{if } a \in \pathassign(\pi)[0] \\
  \pathassign \models_T \neg \varphi              & \text{if } \pathassign \nmodels_T \varphi \\
  \pathassign \models_T \varphi \lor \psi         & \text{if } \pathassign \models_T \varphi \text{ or } \pathassign \models_T \psi \\
  \pathassign \models_T \X \varphi                & \text{if } \pathassign[1,\infty] \models_T \varphi \\
  \pathassign \models_T \varphi\U\psi             & \text{if } \exists i \geq 0 \ldot \pathassign[i,\infty] \models_T \psi \land \forall 0 \leq j < i \ldot \pathassign[j,\infty] \models_T \varphi \\
  \pathassign \models_T \exists \pi \ldot \varphi & \text{if there is some } t \in T \text{ such that } \pathassign[\pi \mapsto t] \models_T \varphi\\
  \pathassign \models_T \forall \pi \ldot \varphi & \text{if for all } t \in T \text{ holds that } \pathassign[\pi \mapsto t] \models_T \varphi
\end{array}$\\[\bigskipamount]
We write $T \models \varphi$ for $\set{} \models_T \varphi$ where $\set{}$ denotes the empty assignment.
Two $\hyperltl$ formulas $\varphi$ and $\psi$ are equivalent, written $\varphi \equiv \psi$ if they have the same models.

\emph{(In)dependence} is a common hyperproperty for which we define the following syntactic sugar.
Given two disjoint subsets of atomic propositions $C \subseteq \ap$ and $A \subseteq \ap$, we define independence as the following $\hyperltl$ formula
\begin{equation}
  \dep{A}{C}\coloneqq
  \forall \pi \forall \pi' \ldot
  \left(
  \bigvee_{a \in A} (a_\pi \nleftrightarrow a_{\pi'})
  \right)
  \R
  \left(
  \bigwedge_{c \in C} ( c_\pi \leftrightarrow c_{\pi'})
  \right)
  \enspace.
\end{equation}
This guarantees that every proposition $c \in C$ solely depends on propositions $A$.

\paragraph{Strategies.}
A \emph{strategy} $f \colon \strat{I}{O}$ maps sequences of input valuations $\pow{I}$ to an output valuation $\pow{O}$.
The behavior of a strategy $f\colon \strat{I}{O}$ is characterized by an infinite tree that branches by the valuations of $I$ and whose nodes $w \in (\pow{I})^*$ are labeled with the strategic choice $f(w)$.
For an infinite word $w = w_0 w_1 w_2 \cdots \in (\pow{I})^\omega$, the corresponding labeled path is defined as $(f(\epsilon) \cup w_0)(f(w_0) \cup w_1)(f(w_0 w_1) \cup w_2)\cdots \in (2^{I \cup O})^\omega$.
We lift the set containment operator $\in$ to the containment of a labeled path $w = w_0 w_1 w_2 \cdots \in (2^{I \cup O})^\omega$ in a strategy tree induced by $f \colon \strat{I}{O}$, i.e., $w \in f$ if, and only if, $f(\epsilon) = w_0 \cap O$ and $f((w_0 \cap I) \cdots (w_i \cap I)) = w_{i+1} \cap O$ for all $i \geq 0$.
We define the satisfaction of a $\hyperltl$ formula $\varphi$ (over propositions $I \cup O$) on strategy~$f$, written $f \models \varphi$, as $\set{w \mid w \in f} \models \varphi$.
Thus, a strategy $f$ is a model of $\varphi$ if the set of labeled paths of $f$ is a model of $\varphi$.

\section{$\hyperltl$ Synthesis} \label{sec:decidability}
In this section, we identify fragments of $\hyperltl$ for which the realizability problem is decidable.
Our findings are summarized in Table~\ref{table:decidability}.

\begin{table}[t]
  \caption{Summary of decidability results.}
  \label{table:decidability}
  \centering
  \begin{tabular}{cccccc}
    \hline\noalign{\smallskip}
     $\exists^*$ & $\exists^* \forall^1$ & \hspace{10pt} $\exists^* \forall^{>1}$ \hspace{10pt} & \hspace{10pt}$\forall^*$\hspace{10pt} & \hspace{10pt}$\forall^* \exists^*$\hspace{10pt} & $\mathit{linear}\;\forall^*$ \\ \noalign{\smallskip} \hline \noalign{\smallskip}
    $\pspace$-complete \hspace{10pt} & $\threeexptime$ & \multicolumn{3}{c}{undecidable} & decidable \\ \noalign{\smallskip} \hline
  \end{tabular}
\end{table}

\begin{definition}[$\hyperltl$ Realizability]
  A $\hyperltl$ formula $\varphi$ over atomic propositions $\ap = I \cupdot O$ is realizable if there is a strategy $f \colon \strat{I}{O}$ that satisfies $\varphi$.
\end{definition}
We base our investigation on the structure of the quantifier prefix of the HyperLTL formulas. We call a HyperLTL formula $\varphi$ (quantifier) \emph{alternation-free} if the quantifier prefix consists solely of either universal or existential quantifiers. We denote the corresponding fragments as the (universal) $\forall^*$ and the (existential) $\exists^*$ fragment, respectively.
A HyperLTL formula is in the $\exists^* \forall^*$ fragment, if it starts with arbitrarily many existential quantifiers, followed by arbitrarily many universal quantifiers.
Respectively for the $\forall^*\exists^*$ fragment.
For a given natural number $n$, we refer to a bounded number of quantifiers with $\forall^n$, respectively $\exists^n$.
The $\forall^1$ realizability problem is equivalent to the $\ltl$ realizability problem.

\subsubsection{$\exists^*$ Fragment.}

We show that the realizability problem for existential $\hyperltl$ is $\pspace$-complete.
We reduce the realizability problem to the satisfiability problem for bounded one-alternating $\exists^*\forall^2 \hyperltl$~\cite{conf/concur/FinkbeinerH16}, i.e., finding a trace set $T$ such that $T \models \varphi$.

\begin{lemma}\label{lemma:EdetSynthesis=EAsat} 
	An existential $\hyperltl$ formula $\varphi$ is realizable if, and only if,
	$\psi \coloneqq \varphi \land \dep{I}{O}$
	is satisfiable.
\end{lemma}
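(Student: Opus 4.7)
The plan is to prove both directions by relating a strategy to its set of labeled paths $T_f := \set{w \mid w \in f}$, so that satisfaction of $\psi$ on a trace set and realization of $\varphi$ by a strategy become comparable. For the forward direction, I would fix a realizing strategy $f$ and argue that $T_f$ models $\psi$. By the definition of $f \models \varphi$ we immediately have $T_f \models \varphi$, so the only burden is $T_f \models \dep{I}{O}$. Unfolding the release, this amounts to showing that whenever two labeled paths $t, t' \in T_f$ differ in an output at some position $i$, their inputs must have differed at some strictly earlier position $j < i$. This is exactly the Moore-style property built into the labeling rule $(f(w_0 \cdots w_{i-1}) \cup w_i)$: the output component at step $i$ is a function of the preceding inputs only, so if the inputs agree on $0, \ldots, i-1$, the outputs at $i$ must be identical.

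For the converse, let $T$ be a model of $\psi$. I would define a strategy $f : (2^I)^* \to 2^O$ by case distinction on an input history $u = u_0 \cdots u_{n-1}$: if some $t \in T$ satisfies $t[j] \cap I = u_j$ for all $j < n$, set $f(u) := t[n] \cap O$ for any such witness; otherwise, put $f(u) := \emptyset$. The key observation is that this construction is well-defined, and this is precisely where the $\dep{I}{O}$ conjunct is used substantively: for any two candidate witnesses $t, t' \in T$ whose input projections extend $u$, the pair $(t, t')$ satisfies $\dep{I}{O}$, the inputs agree on all $j < n$, so by the release semantics outputs must agree at position $n$, i.e., $t[n] \cap O = t'[n] \cap O$. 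A routine induction on position then shows $T \subseteq T_f$.

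To finish, I would invoke monotonicity of purely existential $\hyperltl$ under trace-set enlargement: writing $\varphi = \exists \pi_1 \cdots \exists \pi_k \ldot \psi'$ with $\psi'$ quantifier-free, any witnessing traces $t_1, \ldots, t_k \in T$ also lie in the superset $T_f$, and satisfaction of the quantifier-free body on a fixed assignment does not depend on the surrounding trace set, so $T_f \models \varphi$ and hence $f$ realizes $\varphi$. The main subtle step I expect is the well-definedness of $f$ in the backward direction; the forward direction and the final monotonicity argument are essentially unpacking of definitions.
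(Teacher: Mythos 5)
Your proposal is correct and follows essentially the same route as the paper's proof: the forward direction extracts the trace set of the strategy and observes that the Moore-style labeling makes $\dep{I}{O}$ hold, and the backward direction builds the strategy from input-prefix matching, using $\dep{I}{O}$ for well-definedness and the monotonicity of existential formulas under trace-set enlargement to conclude. The extra detail you supply (the release-semantics unfolding and the explicit containment $T \subseteq T_f$) is exactly what the paper leaves implicit.
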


\begin{proof}
	Assume $f \colon \strat{I}{O}$ realizes $\varphi$, that is $f \models \varphi$.
	Let $T = \set{w \mid w \in f}$ be the set of traces generated by $f$.
	It holds that $T \models \varphi$ and $T \models \dep{I}{O}$.
	Therefore, $\psi$ is satisfiable.
	
	Assume $\psi$ is satisfiable.
	Let $S$ be a set of traces that satisfies $\psi$.
	We construct a strategy $f \colon \strat{I}{O}$ as
	\begin{equation*}
      f(\sigma) =
      \begin{cases}
	  w_\card{\sigma} \cap O & \text{if } \sigma \text{ is a prefix of some } w|_I \text{ with } w \in S \enspace \text{, and} \\
	  \emptyset & \text{otherwise} \enspace.
	  \end{cases}
	\end{equation*}
	Where $w|_I$ denotes the trace restricted to $I$, formally $w_i \cap I$ for all $i \geq 0$. 
	Note that if there are multiple candidates $w \in S$, then $w_\card{\sigma} \cap O$ is the same for all of them because of the required non-determinism $\dep{I}{O}$.
	By construction, all traces in $S$ are contained in $f$ and with $S \models \varphi$ it holds that $f \models \varphi$ as $\varphi$ is an existential formula.
\end{proof}

\begin{theorem}
	Realizability of existential $\hyperltl$ specifications is decidable.
\end{theorem}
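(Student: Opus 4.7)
The plan is to chain Lemma~\ref{lemma:EdetSynthesis=EAsat} with the known decidability of satisfiability for the one-alternating fragment of $\hyperltl$. Given an existential formula $\varphi = \exists \pi_1 \cdots \exists \pi_k \ldot \psi_\varphi$ over $\ap = I \cupdot O$, the lemma reduces realizability of $\varphi$ to satisfiability of
\[
\psi \;\coloneqq\; \varphi \land \dep{I}{O}.
\]
Since $\dep{I}{O}$ is purely universal with two trace quantifiers, and the existential quantifiers of $\varphi$ range over fresh trace variables disjoint from those in $\dep{I}{O}$, standard prenexing of $\hyperltl$ (quantifiers over different variables commute across conjunction in the usual way) yields an equivalent formula
\[
\exists \pi_1 \cdots \exists \pi_k \forall \pi \forall \pi' \ldot \bigl( \psi_\varphi \land \psi_D \bigr),
\]
where $\psi_D$ is the quantifier-free body of $\dep{I}{O}$. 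This formula lies in the $\exists^*\forall^2$ fragment of $\hyperltl$.

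The next step is to invoke the result of~\cite{conf/concur/FinkbeinerH16} that satisfiability of the one-alternating $\exists^*\forall^*$ fragment of $\hyperltl$ is decidable; in particular, $\exists^*\forall^2$ satisfiability is decidable. Combining this with Lemma~\ref{lemma:EdetSynthesis=EAsat} gives a decision procedure for realizability: to decide whether $\varphi$ is realizable, build $\psi$, prenex it as above, and run the $\exists^*\forall^2$ satisfiability checker.

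The main conceptual hurdle is the soundness of the prenexing step, i.e., justifying that $\varphi \land \dep{I}{O}$ really is equivalent to an $\exists^*\forall^2$ formula rather than something more complex. This is a routine check from the $\hyperltl$ semantics given in the preliminaries: the path assignment extends independently for each bound variable, and since the variable sets are disjoint, the quantifier order $\exists\cdots\exists \forall\forall$ preserves the satisfaction relation. Everything else is a direct appeal to prior results, so no further obstacle is expected for mere decidability; finer complexity bounds (the $\pspace$-completeness claim in Table~\ref{table:decidability}) would require additionally accounting for the size of the prenexed formula and the known complexity of $\exists^*\forall^2$ satisfiability.
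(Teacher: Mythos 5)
Your proposal is correct and follows essentially the same route as the paper: the paper's proof simply observes that the formula $\psi$ from Lemma~\ref{lemma:EdetSynthesis=EAsat} lies in the $\exists^*\forall^2$ fragment, whose satisfiability is decidable by~\cite{conf/concur/FinkbeinerH16}. The only difference is that you make the prenexing of $\varphi \land \dep{I}{O}$ into $\exists^*\forall^2$ form explicit, which the paper leaves implicit; this is a harmless elaboration, not a different argument.
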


\begin{proof}
	The formula $\psi$ from Lemma~\ref{lemma:EdetSynthesis=EAsat} is in the $\exists^*\forall^2$ fragment, for which satisfiability is decidable~\cite{conf/concur/FinkbeinerH16}.
\end{proof}

\begin{corollary}
	Realizability of $\exists^* \hyperltl$ specifications is $\pspace$-complete.
\end{corollary}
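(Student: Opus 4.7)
The plan is to establish both the $\pspace$ upper bound and a matching $\pspace$ lower bound. For the upper bound, I would exploit the specific structure of the formula $\psi = \varphi \wedge \dep{I}{O}$ delivered by Lemma~\ref{lemma:EdetSynthesis=EAsat}, rather than using a generic decision procedure for $\exists^*\forall^2$ satisfiability (which would give a worse bound). Writing $\varphi = \exists \pi_1 \cdots \exists \pi_n \ldot \psi'$ with $\psi'$ quantifier-free, satisfiability of $\psi$ asks for $n$ traces $t_1,\dots,t_n \in (2^{\ap})^\omega$ such that (a) $\psi'$ holds under $\pi_i \mapsto t_i$, and (b) for each pair $i,j$ the traces $t_i,t_j$ are observationally deterministic in $O$ relative to $I$.

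The $n$-tuple $(t_1,\dots,t_n)$ can be viewed as a single trace over the product alphabet $(2^{\ap})^n$. Both (a) and (b) then become LTL formulas over this alphabet: for (a), rename each atomic proposition $a_{\pi_i}$ to a fresh proposition $a^{(i)}$, yielding a formula of size $\bigO(|\varphi|)$; for (b), take the $\bigO(n^2)$ pairwise conjuncts $\big(\bigvee_{a \in I}(a^{(i)} \nleftrightarrow a^{(j)})\big) \R \big(\bigwedge_{c \in O}(c^{(i)} \leftrightarrow c^{(j)})\big)$, each of size $\bigO(|\ap|)$. The total LTL formula has size polynomial in $|\varphi|$. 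Since LTL satisfiability is decidable in $\pspace$ in the formula size (independent of alphabet size), realizability is in $\pspace$.

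For the lower bound, I would reduce LTL satisfiability to $\exists^1 \hyperltl$ realizability. Given an LTL formula $\phi$ over propositions $P$, set $I = P$ and $O = \emptyset$. The only strategy $f \colon \strat{I}{O}$ is the constant $f(\sigma) = \emptyset$, whose induced tree contains every infinite input sequence as a labeled path. Hence the HyperLTL formula $\exists \pi \ldot \phi[\pi]$ (with atomic propositions indexed by $\pi$) is realized by $f$ if and only if some trace in $(2^P)^\omega$ satisfies $\phi$, i.e., $\phi$ is LTL-satisfiable. Since LTL satisfiability is $\pspace$-hard, so is $\exists^*$ $\hyperltl$ realizability.

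The main subtlety is the upper-bound step: one has to verify that the $\pspace$ complexity of LTL satisfiability survives the move to the exponentially larger product alphabet $(2^{\ap})^n$. This is fine because the standard on-the-fly nondeterministic algorithm for LTL satisfiability runs in space polynomial in the \emph{formula} size, and after the reduction the formula remains polynomial in $|\varphi|$, so the exponential blow-up of the alphabet is never materialized.
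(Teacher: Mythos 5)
Your proof is correct and follows essentially the same route as the paper: both rest on Lemma~\ref{lemma:EdetSynthesis=EAsat} to reduce realizability to satisfiability of $\varphi \land \dep{I}{O}$, and both obtain hardness from LTL satisfiability via the $\exists^1$ fragment (your $O=\emptyset$ construction is a precise way of stating what the paper asserts tersely). The only difference is that the paper cites the known $\pspace$ bound for $\exists^*\forall^2$ satisfiability with a bounded number of universal quantifiers, whereas you re-derive that bound by instantiating the two universals with the $n$ existential witnesses and checking a polynomial-size LTL formula over the product alphabet --- a correct, self-contained unfolding of the cited result.
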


\begin{proof}
	Given an existential $\hyperltl$ formula, we gave a linear reduction to the satisfiability of the $\exists^*\forall^2$ fragment in Lemma~\ref{lemma:EdetSynthesis=EAsat}.
	The satisfiability problem for a bounded number of universal quantifiers is in $\pspace$~\cite{conf/concur/FinkbeinerH16}.
	Hardness follows from $\ltl$ satisfiability, which is equivalent to the $\exists^1$ fragment.
\end{proof}

\subsubsection{$\forall^*$ Fragment.}

In the following, we will use the \emph{distributed synthesis} problem, i.e., the problem whether there is an implementation of processes in a distributed \emph{architecture} that satisfies an $\ltl$ formula.
Formally, a distributed architecture $\arch$ is a tuple $\tuple{P, p_{env}, \mathcal{I}, \mathcal{O}}$ where
$P$ is a finite set of processes with distinguished environment process $p_\textit{env} \in P$. The functions $\mathcal{I} \colon \fun{P}{\pow{\ap}}$ and $\mathcal{O} \colon \fun{P}{\pow{\ap}}$ define the inputs and outputs of processes.
While processes may share the same inputs (in case of broadcasting), the outputs of processes must be pairwise disjoint, i.e., for all $p \neq p' \in P$ it holds that $\mathcal{O}(p) \cap \mathcal{O}(p') = \emptyset$.
W.l.o.g.~we assume that $\mathcal{I}(\penv) = \emptyset$.
The distributed synthesis problem for architectures without \emph{information forks}~\cite{conf/lics/FinkbeinerS05} is decidable.
Example architectures are depicted in Fig.~\ref{fig:distributed-architectures}.
The architecture in Fig.~\ref{fig:informationFork} contains an information fork while the architecture in Fig.~\ref{fig:architecture-incomplete-information} does not.
Furthermore, the processes in Fig.~\ref{fig:architecture-incomplete-information} can be ordered linearly according to the subset relation on the inputs.

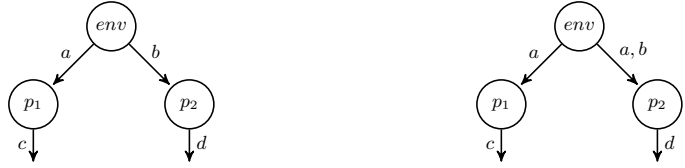
\begin{figure}[t]
	\begin{subfigure}[t]{0.49\textwidth}
		\centering
		\begin{tikzpicture}[->,>=stealth',shorten >=1pt,auto,semithick,scale=1,transform shape,scale=0.8]
	\node [state] (e) {$env$};
	\node [state, below left=1 of e] (a) {$p_1$};
	\node [state, below right=1 of e] (b) {$p_2$};
	\path[->]
	(e) edge node [label,above left = 0 and -0.1] {$a$} (a)
	(e) edge node [label,above right = 0 and -0.1] {$b$} (b)
	(a) edge node [label,above left = -0.15 and 0] {$c$} +(0,-1)
	(b) edge node [label,above right = -0.15 and 0] {$d$} +(0,-1)
	;
\end{tikzpicture}
		\caption{An architecture of two processes that specify process $p_1$ to produce $c$ from $a$ and $p_2$ to produce $d$ from $b$.} 
		\label{fig:informationFork}
	\end{subfigure}\hfill
	\begin{subfigure}[t]{0.49\textwidth}
		\centering
		\begin{tikzpicture}[->,>=stealth',shorten >=1pt,auto,semithick,scale=1,transform shape,scale=0.8]
	\node [state] (e) {$env$};
	\node [state, below left=1 of e] (a) {$p_1$};
	\node [state, below right=1 of e] (b) {$p_2$};
	\path[->]
	(e) edge node [label,above left = 0 and -0.1] {$a$} (a)
	(e) edge node [label,above right = 0 and -0.1] {$a,b$} (b)
	(a) edge node [label,above left = -0.15 and 0] {$c$} +(0,-1)
	(b) edge node [label,above right = -0.15 and 0] {$d$} +(0,-1)
	;
\end{tikzpicture}
		\caption{The same architecture as on the left, where only the inputs of process $p_2$ are changed to $a$ and $b$.}
		\label{fig:architecture-incomplete-information}
	\end{subfigure}
	\caption{Distributed architectures}
	\label{fig:distributed-architectures}
\end{figure}

\begin{theorem}
	The synthesis problem for universal $\hyperltl$ is undecidable.
\end{theorem}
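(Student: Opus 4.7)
The plan is to reduce the distributed synthesis problem for a distributed architecture with an information fork, which is known to be undecidable~\cite{conf/focs/PnueliR90,conf/lics/FinkbeinerS05}, to the realizability problem for universal HyperLTL. The architecture in Fig.~\ref{fig:informationFork} is the canonical witness: two processes $p_1,p_2$ each receive a private input ($a$ and $b$, respectively) from the environment and must produce a private output ($c$ and $d$, respectively). The defining constraint of this distributed setting is exactly that $c$ must not depend on $b$ and $d$ must not depend on $a$, which is naturally expressible as a pair of independence hyperproperties.

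Concretely, given an LTL specification $\varphi$ over $\ap = \{a,b,c,d\}$ that one wants to synthesize for the architecture of Fig.~\ref{fig:informationFork}, I would construct the universal HyperLTL formula
\begin{equation*}
  \Phi \;\coloneqq\; (\forall \pi \ldot \varphi_\pi) \;\wedge\; \dep{\{a\}}{\{c\}} \;\wedge\; \dep{\{b\}}{\{d\}},
\end{equation*}
where $\varphi_\pi$ is the LTL formula $\varphi$ with every atomic proposition $p$ replaced by $p_\pi$. Both independence conjuncts are in the $\forall^2$ fragment by definition, and the first conjunct uses a single universal quantifier, so $\Phi$ is a universal HyperLTL formula. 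Realizability is considered with inputs $I = \{a,b\}$ and outputs $O = \{c,d\}$.

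The main technical step is to argue that $\Phi$ is realizable by some $f\colon \strat{I}{O}$ if and only if the LTL specification $\varphi$ is realizable in the architecture of Fig.~\ref{fig:informationFork}. For the easy direction, from a distributed implementation $(f_1,f_2)$ with $f_1\colon (2^{\{a\}})^* \to 2^{\{c\}}$ and $f_2\colon (2^{\{b\}})^* \to 2^{\{d\}}$ one obtains a centralized strategy $f$ whose traces satisfy $\varphi$ on every branch and whose outputs at position $i$ depend, respectively, only on the $a$- and the $b$-history; this is exactly what the two $\dep{\cdot}{\cdot}$ conjuncts assert. Conversely, from any $f \models \Phi$ the two independence constraints force $f(w)\cap\{c\}$ to be a function of the projection of $w$ onto $a$ alone and $f(w)\cap\{d\}$ to be a function of the projection onto $b$ alone (any two input histories that agree on $a$ produce traces that agree on $c$ globally, by $\dep{\{a\}}{\{c\}}$, and symmetrically for $b,d$), so one can read off well-defined local strategies $f_1,f_2$ realizing $\varphi$ in the architecture.

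The main obstacle I anticipate is making the "read off local strategies" step rigorous: the independence formula quantifies over pairs of traces in the generated trace set, so one must verify that for every finite input history $\sigma\in (2^{\{a\}})^*$ there is at least one extension appearing as the $a$-projection of a path of $f$ (so that $f_1(\sigma)$ is defined), and that all such extensions yield the same $c$-value at position $|\sigma|$. The first point follows because $f$ is input-enabled (it reacts to all input sequences in $(2^I)^\omega$), and the second is precisely the conclusion of $\dep{\{a\}}{\{c\}}$ once one observes that the release-formulation in the definition of $D$ forces agreement on $c$ up to the first position where the $a$-inputs disagree. Once this equivalence is established, undecidability of universal HyperLTL realizability follows immediately from the undecidability of distributed LTL synthesis in the information-fork architecture.
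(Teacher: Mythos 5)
Your proposal is correct and follows exactly the paper's approach: reducing from distributed LTL synthesis for the information-fork architecture of Fig.~\ref{fig:informationFork} by encoding the two process-locality constraints as $\dep{\{a\}}{\{c\}}$ and $\dep{\{b\}}{\{d\}}$ and conjoining the LTL specification. The paper's proof is only a sketch; your additional care in extracting well-defined local strategies from the independence constraints fills in details the paper leaves implicit, but the argument is the same.
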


\begin{proof}
	In the $\forall^*$ fragment (and thus in the $\exists^*\forall^*$ fragment), we can encode a distributed architecture~\cite{conf/lics/FinkbeinerS05}, for which $\ltl$ synthesis is undecidable. In particular, we can encode the architecture shown in Fig.~\ref{fig:informationFork}. This architecture basically specifies $c$ to depend only on $a$ and analogously $d$ on $b$. That can be encoded by $\dep{\{a\}}{\{c\}}$ and $\dep{\{b\}}{\{d\}}$. The $\ltl$ synthesis problem for this architecture is already shown to be undecidable~\cite{conf/lics/FinkbeinerS05}, i.e., given an $\ltl$ formula over $I=\{a,b\}$ and $O=\{c,d\}$, we cannot automatically construct processes $p_1$ and $p_2$ that realize the formula.
\end{proof}

\subsubsection{Linear $\forall^*$ Fragment.}

For characterizing the linear fragment of HyperLTL, we will present a transformation from a formula with arbitrarily many universal quantifiers to a formula with only one quantifier.
This transformation collapses the universal quantifier into a single one and renames the path variables accordingly.
For example, $\forall \pi_1 \forall \pi_2 \ldot \G a_{\pi_1} \lor \G a_{\pi_2}$ is transformed into an equivalent $\forall^1$ formula $\forall \pi \ldot \G a_\pi \lor \G a_\pi$.
However, this transformation does not always produce equivalent formulas as $\forall \pi_1 \forall \pi_2 \ldot \G (a_{\pi_1} \leftrightarrow a_{\pi_2})$ is not equivalent to its collapsed form $\forall \pi\ldot \G (a_\pi \leftrightarrow a_\pi)$.

Let $\varphi$ be $ \forall \pi_1 \cdots \forall \pi_n \ldot \psi$.
We define the collapsed formula of $\varphi$ as $\collapse(\varphi) \coloneqq \forall \pi \ldot \psi[\pi_1 \mapsto \pi][\pi_2 \mapsto \pi]\dots[\pi_n \mapsto \pi]$ where $\psi[\pi_i \mapsto \pi]$ replaces all occurrences of $\pi_i$ in $\psi$ with $\pi$.
Although the collapsed term is not always equivalent to the original formula, we can use it as an indicator whether it is possible at all to express a universal formula with only one quantifier as stated in the following lemma.

\begin{lemma}
	Either $\varphi \equiv \collapse(\varphi)$ or $\varphi$ has no equivalent $\forall^1$ formula.
\end{lemma}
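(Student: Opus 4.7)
The plan is to establish the contrapositive: assuming that $\varphi$ does have some equivalent $\forall^1$ formula $\forall\pi\ldot\chi$, I will derive $\varphi \equiv \collapse(\varphi)$. The main tool will be a ``singleton evaluation'' argument exploiting that both $\forall^1$ formulas and $\collapse(\varphi)$ are universal, hence entirely determined by how they evaluate on single traces.

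First I would record two easy facts. One, the forward containment $\varphi \models \collapse(\varphi)$ holds unconditionally: if every $n$-tuple of traces from $T$ satisfies the matrix $\psi$, then in particular so does every $n$-tuple of the form $(t,t,\dots,t)$, which is precisely what $\collapse(\varphi)$ demands. Two, for any singleton trace set $\{t\}$, the semantics of $\varphi$ and of $\collapse(\varphi)$ coincide, because the only available assignment to $\pi_1,\dots,\pi_n$ maps each variable to $t$; this is exactly the assignment that $\collapse(\varphi)$ tests.

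Next I would assume $\varphi \equiv \forall\pi\ldot\chi$ and use the standard observation that a $\forall^1$ formula $\forall\pi\ldot\chi$ is satisfied by $T$ iff every trace $t\in T$ satisfies $\chi$ viewed as an LTL formula, and therefore iff $\{t\} \models \forall\pi\ldot\chi$ for every $t \in T$ (universal formulas are stable under intersections/restrictions to sub-sets of traces). Chaining the equivalences for any trace set $T$:
\begin{align*}
T \models \collapse(\varphi)
 &\iff \forall t \in T\ldot \{t\} \models \collapse(\varphi)\\
 &\iff \forall t \in T\ldot \{t\} \models \varphi\\
 &\iff \forall t \in T\ldot \{t\} \models \forall\pi\ldot\chi\\
 &\iff T \models \forall\pi\ldot\chi \iff T \models \varphi,
\end{align*}
using in order: $\collapse(\varphi)$ is $\forall^1$; the singleton agreement between $\varphi$ and $\collapse(\varphi)$; the assumed equivalence $\varphi \equiv \forall\pi\ldot\chi$; and again the singleton characterization of $\forall^1$ formulas. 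This yields $\varphi \equiv \collapse(\varphi)$, completing the contrapositive.

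I do not anticipate a serious obstacle: the argument is essentially a ``$\forall^1$ formulas are pointwise'' observation together with the trivial fact that $\varphi$ and $\collapse(\varphi)$ cannot be distinguished by singleton trace sets. The only thing worth being careful about is the direction $\varphi \models \collapse(\varphi)$ (trivial as above) versus the converse, which is where the assumption $\varphi\equiv\forall\pi\ldot\chi$ gets used, and ensuring that the singleton characterization of $\forall^1$ formulas is invoked cleanly over the empty path assignment, so the inductive semantics on the LTL-matrix $\chi$ behaves correctly.
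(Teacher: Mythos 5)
Your proof is correct and follows essentially the same route as the paper: both arguments reduce satisfaction of a $\forall^1$ formula to satisfaction on the singleton trace sets $\{t\}$ for $t \in T$, observe that $\varphi$ and $\collapse(\varphi)$ agree on singletons (since the only available assignment is the diagonal one), and chain these equivalences through the assumed $\forall^1$ equivalent of $\varphi$. The contrapositive phrasing and the explicit remark that $\varphi \models \collapse(\varphi)$ holds unconditionally are cosmetic differences only.
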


\begin{proof}
	Suppose there is some $\psi \in \forall^1$ with $\psi \equiv \varphi$.
	We show that $\psi \equiv \collapse(\varphi)$.
	Let $T$ be an arbitrary set of traces.
	Let $\mathcal{T} = \set{\set{w} \mid w \in T }$.
	Because $\psi \in \forall^1$, $T \models \psi$ is equivalent to $\forall T' \in \mathcal{T} \ldot T' \models \psi$, which is by assumption equivalent to $\forall T' \in \mathcal{T} \ldot T' \models \varphi$.
	Now, $\varphi$ operates on singleton trace sets only.
	This means that all quantified paths have to be the same, which yields that we can use the same path variable for all of them.
	So $\forall T' \in \mathcal{T} \ldot T' \models \varphi \leftrightarrow  T' \models \collapse(\varphi)$ that is again equivalent to $T \models \collapse(\varphi)$.
	Because $\psi \equiv \collapse(\varphi)$ and $\psi \equiv \varphi$ it holds that $\varphi \equiv \collapse(\varphi)$.
\end{proof}

The $\ltl$ realizability problem for distributed architectures without information forks~\cite{conf/lics/FinkbeinerS05} are decidable. These architectures are in some way \emph{linear}, i.e., the processes can be ordered such that lower processes always have a subset of the information of upper processes.
The linear fragment of universal $\hyperltl$ addresses exactly these architectures.

In the following, we sketch the characterization of the linear fragment of $\hyperltl$.
Given a formula $\varphi$, we seek for variable dependencies of the form $\dep{J}{\set{o}}$ with $J\<I$ and $o \in O$ in the formula.
If the part of the formula $\varphi$ that relates multiple paths consists only of such constraints $\dep{J}{\set{o}}$ with the rest being an $\ltl$ property, we can interpret $\varphi$ as a description of a distributed architecture.
If furthermore, the $\dep{J_i}{\set{o_i}}$ constraints can be ordered such that $J_i\< J_{i+1}$ for all $i$, the architecture is linear.
There are three steps to check whether $\varphi$ is in the linear fragment:
\begin{enumerate}
	\item First, we have to add input-determinism to the formula $\varphi_\mathit{det} \coloneqq \varphi \land \dep{I}{O}$.
	      This preserves realizability as strategies are input-deterministic.
	\item Find for each output variable $o_i \in O$ possible sets of variables $J_i$, $o_i$ depends on, such that $J_i\< J_{i+1}$.
	      To check whether the choice of $J$'s is correct, we test if $\collapse(\varphi) \land \bigwedge_{o_i \in O} \dep{J_i}{\set{o_i}}$ is equivalent to $\varphi_\mathit{det}$.
	      This equivalence check is decidable as both formulas are in the universal fragment~\cite{conf/concur/FinkbeinerH16}.
	\item Finally, we construct the corresponding distributed realizability problem with linear architecture.
	      Formally, we define the distributed architecture $\arch = \tuple{P, p_{env},\mathcal{I},\mathcal{O}}$ with $P = \set{p_i \mid o_i \in O} \cup \set{\penv}$, $\mathcal{I}(p_i) = J_i$, $\mathcal{O}(p_i) = \set{o_i}$, and $\mathcal{O}(\penv) = I$. The $\ltl$ specification for the distributed synthesis problem is $\collapse(\varphi)$
\end{enumerate}

\begin{definition}[linear fragment of $\forall^*$]
	A formula $\varphi$ is in the linear fragment of $\forall^*$ iff for all $o_i \in O$ there is a $J_i \< I$ such that $\varphi \land \dep{I}{O} \equiv \collapse(\varphi) \land \bigwedge_{o_i \in O} \dep{J_i}{\set{o_i}}$ and $J_i \< J_{i+1}$ for all $i$.
\end{definition}
%
%
Note, that each $\forall^1$ formula $\varphi$ (or $\varphi$ is collapsible to a $\forall^1$ formula) is in the linear fragment because we can set all $J_i=I$ and additionally $\collapse(\varphi)=\varphi$ holds.

As an example of a formula in the linear fragment of $\forall^*$, consider $\varphi = \forall\pi,\pi'\ldot \dep{\set{a}}{\set{c}} \land \G  (c_\pi\leftrightarrow d_\pi ) \land \G (b_\pi\leftrightarrow \X e_\pi) $ with $I=\set{a,b}$ and $O=\set{c,d,e}$.
The corresponding formula asserting input-deterministism is $\varphi_{det}=\varphi\land\dep{I}{O}$.
One possible choice of $J$'s is $\set{a,b}$ for $c$, $\set{a}$ for $d$ and $\set{a,b}$ for $e$.
Note, that one can use either $\set{a,b}$ or $\set{a}$ for $c$ as $\dep{\set{a}}{\set{d}} \land (c_\pi\leftrightarrow d_\pi )$ implies $\dep{\set{a}}{\set{c}}$.
However, the apparent alternative $\set{b}$ for $e$ would yield an undecidable architecture.
It holds that $\varphi_{det}$ and $\collapse(\varphi) \land \dep{\set{a,b}}{\set{c}} \land \dep{\set{a}}{\set{d}} \land \dep{\set{a,b}}{\set{e}}$ are equivalent and, thus, that $\varphi$ is in the linear fragment.


\begin{theorem}
	The linear fragment of universal $\hyperltl$ is decidable.
\end{theorem}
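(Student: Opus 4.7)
The plan is to give a decision procedure that takes a universal formula $\varphi$ in the linear fragment and reduces its realizability to a distributed $\ltl$ synthesis problem over an architecture that is guaranteed to be information-fork-free, so that the Finkbeiner--Schewe decidability result for such architectures applies.

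First, I would describe how to effectively recognize membership in the linear fragment and extract the witnessing dependency sets. Since $I$ is finite, there are only finitely many candidates for the tuple $(J_i)_{o_i \in O}$ with $J_i \subseteq I$. For each candidate that additionally respects the chain condition $J_i \subseteq J_{i+1}$, I check whether
\[
\varphi \land \dep{I}{O} \;\equiv\; \collapse(\varphi) \land \bigwedge_{o_i \in O} \dep{J_i}{\set{o_i}}.
\]
Both sides are universal HyperLTL formulas, so the equivalence check reduces to an unsatisfiability test on a universal HyperLTL formula, which is decidable by the satisfiability result for the $\forall^*$ fragment cited from \cite{conf/concur/FinkbeinerH16}. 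Enumerating the finitely many candidates therefore decides membership and, when successful, yields the dependency sets.

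Second, given such a witnessing family $(J_i)$, I construct the distributed architecture $\arch = \tuple{P, \penv, \mathcal{I}, \mathcal{O}}$ exactly as described in the third step preceding the definition: one process $p_i$ per output $o_i$ with $\mathcal{I}(p_i) = J_i$ and $\mathcal{O}(p_i) = \set{o_i}$, and the environment providing $I$. The chain condition $J_i \subseteq J_{i+1}$ linearly orders the processes by their input sets, so no two processes receive incomparable information from the environment; consequently $\arch$ contains no information fork in the sense of \cite{conf/lics/FinkbeinerS05}. The $\ltl$ specification handed to the distributed synthesis tool is $\collapse(\varphi)$, a genuine $\forall^1$ formula that can be treated as an $\ltl$ formula over a single trace.

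Third, I would argue correctness of the reduction. A distributed implementation in $\arch$ that satisfies $\collapse(\varphi)$ induces a strategy $f \colon \strat{I}{O}$ whose trace set $T$ satisfies $\dep{I}{O}$ (strategies are input-deterministic) and $\dep{J_i}{\set{o_i}}$ (each process only sees $J_i$), and that satisfies $\collapse(\varphi)$ per branch, which for singleton path assignments is equivalent to $\varphi$. Hence $T$ satisfies the right-hand side of the equivalence, and therefore, by the witnessing equivalence, also $\varphi \land \dep{I}{O}$, i.e., $f \models \varphi$. Conversely, a strategy realizing $\varphi$ satisfies $\varphi \land \dep{I}{O}$, hence also $\collapse(\varphi) \land \bigwedge_i \dep{J_i}{\set{o_i}}$; the latter conjuncts together imply that each output $o_i$ can be computed from $J_i$ alone, giving the desired distributed implementation in $\arch$ that satisfies the $\ltl$ formula $\collapse(\varphi)$. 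Since distributed $\ltl$ synthesis is decidable for fork-free architectures, this settles decidability.

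The main obstacle I expect is the correctness argument for the reduction rather than the enumeration step: one must carefully argue that the semantic equivalence licensed by linearity actually lets one move between a HyperLTL trace set satisfying the original specification and a tuple of local, input-restricted strategies, and that the chain condition $J_i \subseteq J_{i+1}$ is exactly what rules out the information fork that is responsible for undecidability in the general $\forall^*$ case.
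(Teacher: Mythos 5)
Your proposal is correct and follows essentially the same route as the paper: enumerate the finitely many candidate dependency sets $J_i$, use the decidable equivalence check to certify linearity, and reduce to distributed $\ltl$ synthesis of $\collapse(\varphi)$ over the fork-free chain architecture from the construction preceding the definition. The paper's own proof is just a terser version of your third step; your added detail on enumeration and the two directions of the reduction fills in what the paper leaves implicit.
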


\begin{proof}
    It holds that $\varphi \equiv \collapse(\varphi) \land \bigwedge_{o_i \in O} \dep{J_i}{\set{o_i}}$ for some $J_i$'s.
    The $\ltl$ distributed realizability problem for $\collapse(\varphi)$ in the constructed architecture $A$ is equivalent to the $\hyperltl$ realizability of $\varphi$ as the architecture $\arch$ represents exactly the input-determinism represented by formula $\bigwedge_{o_i \in O} \dep{J_i}{\set{o_i}}$.
	The architecture is linear and, thus, the realizability problem is decidable.
\end{proof}

\subsubsection{$\exists^* \forall^1$ Fragment.}

In this fragment, we consider arbitrary many existential path quantifier followed by a single universal path quantifier.
This fragment turns out to be still decidable.
We solve the realizability problem for this fragment by reducing it to a decidable fragment of the distributed realizability problem.

\begin{theorem} \label{thm:decidability_ea}
	Realizability of $\exists^* \forall^1 \hyperltl$ specifications is decidable.
\end{theorem}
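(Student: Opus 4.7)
The plan is to reduce $\exists^*\forall^1$ HyperLTL realizability to LTL realizability in a distributed architecture without information forks, which is decidable by the Finkbeiner--Schewe result already invoked earlier in the paper. The starting observation is that a formula $\varphi = \exists \pi_1 \cdots \exists \pi_k \forall \pi \ldot \psi$ is realized by a strategy $f\colon \strat{I}{O}$ if and only if there exist witness input sequences $w_1, \dots, w_k \in (2^I)^\omega$ such that, for every environment input $w \in (2^I)^\omega$, the LTL matrix $\psi$ holds on the traces of $f$ induced by $w_1, \dots, w_k, w$. The $k$ existentially quantified paths therefore correspond to $k$ input-output sequences that the synthesized implementation must commit to up front, while the single universal path captures precisely the standard ``for every environment input'' quantification of ordinary LTL reactive synthesis.

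Concretely, I would build a distributed architecture $\arch$ over atomic propositions $I$, $O$, and fresh copies $I^j$, $O^j$ for $j \in \set{1,\dots,k}$, containing two processes besides the environment: a \emph{witness process} $p_w$ with $\mathcal{I}(p_w) = \emptyset$ and $\mathcal{O}(p_w) = \bigcup_j (I^j \cup O^j)$ that produces the committed witness streams, and a \emph{strategy process} $p_f$ with $\mathcal{I}(p_f) = I$ and $\mathcal{O}(p_f) = O$ that implements the actual strategy. The LTL specification would be the conjunction of (i) a translated matrix $\hat\psi$ obtained from $\psi$ by replacing each $a_{\pi_j}$ with the corresponding proposition in $I^j \cup O^j$ and each $a_\pi$ with the corresponding proposition in $I \cup O$, and (ii) for every $j$, a single-trace \emph{consistency} conjunct $\chi_j$ stating that as long as the environment input $I$ has agreed with the committed witness input $I^j$ at all strictly earlier positions, the current output $O$ of $p_f$ must agree with the committed witness output $O^j$. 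The conjunct $\chi_j$ is a safety property and is expressible in LTL, for instance as $(O = O^j) \W \bigl((I \neq I^j) \land (O = O^j)\bigr)$ with the equalities expanded propositionwise. Intuitively, since the environment may in particular pick the sequence $I^j$ as its input, $\chi_j$ forces the committed trace $(I^j, O^j)$ to actually be a trace of $p_f$, which is exactly what HyperLTL's $\exists \pi_j$ semantics demands.

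For correctness, given a strategy $f$ realizing $\varphi$ with witness input sequences $w_j$, one sets $p_f := f$ and lets $p_w$ output $I^j := w_j$ and $O^j$ equal to the $O$-projection of the $f$-trace induced by $w_j$; the consistency conjuncts then hold by input-determinism of $f$, and $\hat\psi$ holds by assumption. Conversely, from any distributed implementation $(p_w, p_f)$ of the LTL specification one extracts $f := p_f$ and witnesses $w_j$ equal to the $I^j$-projection of the (constant) output of $p_w$; the conjunct $\chi_j$ pins down $O^j$ to the strategy's response on $w_j$, so $(I^j, O^j)$ really is a trace of $f$, and satisfaction of $\hat\psi$ then yields $f \models \varphi$. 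To apply the Finkbeiner--Schewe decidability result, one observes that the information sets of the two system processes, namely $\emptyset$ for $p_w$ and $I$ for $p_f$, are linearly ordered by inclusion, so $\arch$ contains no information fork. The main technical obstacle I anticipate is getting the consistency conjunct $\chi_j$ exactly right at the off-by-one boundary where the input first diverges from the witness---this is where the Mealy-style output timing of the trace semantics interacts with the weak-until---together with verifying that the two-process architecture really falls within the no-information-fork class as formalised by Finkbeiner and Schewe.
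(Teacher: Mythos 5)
Your proposal is correct and matches the paper's proof essentially step for step: the same two-process architecture (a witness process with empty input set committing to the existential traces, and a strategy process reading $I$), the same appeal to the absence of information forks, and the same idea of an LTL consistency constraint forcing the committed witness traces to be actual traces of the strategy. The only difference is cosmetic: the paper phrases the consistency constraint as $\G (I_{\pi_i} = I_{\pi'}) \rightarrow \G (O_{\pi_i} = O_{\pi'})$ rather than your prefix-based weak-until formulation, and both work because the environment may in particular play the witness input sequence.
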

\begin{proof}
	Let $\varphi$ be $\exists \pi_1 \dots \exists \pi_n \forall \pi' \ldot \psi$.
	We reduce the realizability problem of $\varphi$ to the distributed realizability problem for $\ltl$.
	For every existential path quantifier $\pi_i$, we introduce a copy of the atomic propositions, written $a_{\pi_i}$ for $a \in \ap$.
	Intuitively, those select the paths in the strategy tree where the existential path quantifiers are evaluated.
	Thus, those propositions (1) have to encode an actual path in the strategy tree and (2) may not depend on the branching of the strategy tree.
	To ensure (1), we add the $\ltl$ constraint $\G (I_{\pi_i} = I_{\pi'}) \rightarrow \G (O_{\pi_i} = O_{\pi'})$ that asserts that if the inputs correspond to some path in the strategy tree, the outputs on those paths have to be the same.
	Property (2) is guaranteed by the distributed architecture, the processes generating the propositions $a_{\pi_i}$ do not depend on the environment output.
	The resulting architecture $\arch_\varphi$ is $\tuple{\set{\penv,p,p'},\penv,\set{ p \mapsto \emptyset, p' \mapsto I_{\pi'} }, \set{ \penv \mapsto I_{\pi'}, p \mapsto \bigcup_{1 \leq i \leq n} O_{\pi_i} \cup I_{\pi_i}, p' \mapsto O_{\pi'}}}$.
	It is easy to verify that $\arch_\varphi$ does not contain an information fork, thus the realizability problem is decidable.
	The $\ltl$ specification $\theta$ is $\psi \land \bigwedge_{1 \leq i \leq n} \G (I_{\pi_i} = I_{\pi'}) \rightarrow \G (O_{\pi_i} = O_{\pi'})$.
	The implementation of process $p'$ (if it exists) is a model for the $\hyperltl$ formula (process $p$ producing witness for the $\exists$ quantifier).
	Conversely, a model for $\varphi$ can be used as an implementation of $p'$.
	Thus, the distributed synthesis problem $\tuple{\arch_\varphi,\theta}$ has a solution if, and only if, $\varphi$ is realizable.
\end{proof}

\subsubsection{$\forall^* \exists^*$ Fragment.}

The last fragment to consider are formulas in the $\forall^*\exists^*$ fragment.
Whereas the $\exists^*\forall^1$ fragment remains decidable, the realizability problem of $\forall^*\exists^*$ turns out to be undecidable even when restricted to only one quantifier of both sorts ($\forall^1\exists^1$).

\begin{theorem}\label{pcp}
	Realizability of $\forall^* \exists^*$HyperLTL is undecidable.
\end{theorem}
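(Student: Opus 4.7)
The plan is to reduce from Post's Correspondence Problem (PCP), consistent with the theorem's label. Given a PCP instance $P = ((u_1, v_1), \ldots, (u_k, v_k))$ over $\Sigma = \set{0,1}$, the goal is to construct a $\forall^1 \exists^1$ HyperLTL formula $\varphi_P$ such that $\varphi_P$ is realizable if and only if $P$ has a matching sequence. Since PCP is undecidable, this establishes undecidability of realizability for the fragment, and a fortiori for the full $\forall^*\exists^*$ fragment.

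For the encoding I would choose inputs $I$ consisting of a tile choice $t \in \set{1,\ldots,k}$ per step, a termination flag $\mathit{stop}$, and a side flag $s$ selecting whether this branch reads off the $u$-concatenation or the $v$-concatenation; outputs $O$ would carry the current alphabet symbol and a boundary marker at the end of each $u_{t_j}$ (respectively $v_{t_j}$) block. The LTL body forces each branch, given the tile sequence chosen by the environment, to emit exactly the characters of $u_{t_j}$ when $s$ selects the $u$-side and of $v_{t_j}$ when $s$ selects the $v$-side, with boundary markers firing at block ends and an idle symbol padding after $\mathit{stop}$. The quantifier prefix $\forall \pi\, \exists \pi'$ then asserts that for every $u$-side branch $\pi$ that terminates, there is a $v$-side branch $\pi'$ using the same tile sequence, terminating at the same step, whose emitted character stream coincides with that of $\pi$ position by position. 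A realizing strategy exhibits, for each terminating tile sequence, the equation $u_{t_1}\cdots u_{t_n} = v_{t_1}\cdots v_{t_n}$, i.e., a PCP solution; conversely, given a PCP solution one can define a deterministic strategy that simply reads out the chosen characters as the environment feeds in any tile sequence, with the existential witness realized by the dual-side branch sharing the same tile inputs.

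The main obstacle is the timing and length mismatch: $|u_{t_j}|$ and $|v_{t_j}|$ need not coincide, so the two character streams progress at different rates on the same sequence of tiles, and a naive bit-for-bit comparison across views breaks. I would address this by emitting at most one symbol per environment step, using the boundary marker together with LTL waiting constraints so that equality of streams is compared only between fully terminated prefixes, with the tile-pointer advancing only when a boundary fires. A secondary subtlety is pinning the existential witness $\pi'$ to actually match the tile sequence of $\pi$: this is achieved by guarding the equality requirement behind an LTL implication that activates only when the tile inputs and $\mathit{stop}$ sub-traces of $\pi$ and $\pi'$ agree throughout, so that the only way the strategy can discharge the $\exists \pi'$ obligation is to provide a dual-side branch that certifies the match on that very tile sequence.
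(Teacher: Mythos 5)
Your high-level plan---a reduction from PCP targeting the $\forall^1\exists^1$ fragment---matches the paper's intent (the paper's proof is a one-liner deferring to the PCP reduction for $\forall\exists$ satisfiability in Finkbeiner and Hahn's CONCUR~2016 paper), but your concrete construction has a gap that breaks the forward direction of the reduction. You let the \emph{environment} choose the tile sequence, the $\mathit{stop}$ flag, and the side flag, and you make the outputs a deterministic function of these inputs. Then \emph{every} finite tile sequence occurs as a terminating $u$-side branch of any strategy tree, and your $\forall\pi\,\exists\pi'$ requirement imposes, for each such branch, that $u_{t_1}\cdots u_{t_n}=v_{t_1}\cdots v_{t_n}$. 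Hence your formula is realizable if and only if every tile sequence is a match, not if and only if some tile sequence is a match; for a typical solvable PCP instance (where, say, $u_1\neq v_1$, so the one-tile sequence is not a match) your formula is unrealizable, so ``PCP solvable $\Rightarrow$ realizable'' fails and undecidability does not follow. The guard you add to pin $\pi'$ to the tile sequence of $\pi$ does not help here: it sits under the existential quantifier and only constrains the witness, not which branches $\pi$ trigger the obligation.

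The repair is to invert the roles: the existence of a matching domino sequence must be something the \emph{system} can exhibit (e.g., by outputting a fixed encoding of a candidate solution on its branches), while the environment inputs serve only to populate the strategy tree with the auxiliary traces needed to verify the match---shifted or suffix copies against which the $\forall\pi\,\exists\pi'$ prefix compares the $u$- and $v$-concatenations. This is how the satisfiability reduction the paper cites transfers to the realizability setting. Your handling of the length mismatch (boundary markers, one symbol per step, padding after termination) is sensible and would be needed in any such encoding, but it does not address the quantifier-role problem above.
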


\begin{proof}
	The proof is done via reduction from Post's Correspondence Problem (PCP)~\cite{post1946variant}. The basic idea follows the proof in~\cite{conf/concur/FinkbeinerH16}.
\end{proof}

\section{Bounded Realizability} \label{sec:bounded-realizability}

In this section, we propose an algorithm to synthesize strategies from specifications given in universal $\hyperltl$ by searching for finite generators of realizing strategies.
We encode this search as a satisfiability problem for a decidable constraint system.

\vspace{-5pt}\paragraph{Transition Systems.}
A \emph{transition system} $\tsys$ is a tuple $\tuple{S,s_0,\tau,l}$ where
$S$ is a finite set of states,
$s_0 \in S$ is the designated initial state,
$\tau \colon \fun{S \times \pow{I}}{S}$ is the transition function, and
$l \colon \fun{S}{\pow{O}}$ is the state-labeling or output function.
We generalize the transition function to sequences over $\pow{I}$ by defining $\tau^* \colon \fun{(\pow{I})^*}{S}$ recursively as $\tau^*(\epsilon) = s_0$ and $\tau^*(w_0 \cdots w_{n-1} w_n) = \tau(\tau^*(w_0 \cdots w_{n-1}), w_n)$ for $w_0 \cdots w_{n-1} w_n \in (\pow{I})^+$.
A transition system $\tsys$ \emph{generates} the strategy $f$ if $f(w) = l(\tau^*(w))$ for every $w \in (\pow{I})^*$.
A strategy $f$ is called \emph{finite-state} if there exists a transition system that generates~$f$.


\vspace{-5pt}\paragraph{Overview.}

We first sketch the synthesis procedure and then proceed with a description of the intermediate steps.
Let $\varphi$ be a universal $\hyperltl$ formula $\forall \pi_1 \cdots \forall \pi_n \ldot \psi$.
We build the automaton $\mathcal{A}_\psi$ whose language is the set of tuples of traces that satisfy $\psi$.
We then define the acceptance of a transition system $\tsys$ on $\mathcal{A}_\psi$ by means of the self-composition of $\tsys$.
Lastly, we encode the existence of a transition system accepted by $\mathcal{A}_\psi$ as an SMT constraint system.

\begin{example} \label{ex:secret-decision}
  Throughout this section, we will use the following (simplified) running example.
  Assume we want to synthesize a system that keeps decisions secret until it is allowed to publish.
  Thus, our system has three input signals \emph{decision}, indicating whether a decision was made, the secret \emph{value}, and a signal to \emph{publish} results.
  Furthermore, our system has two outputs, a \emph{high} output \emph{internal} that stores the value of the last decision, and a \emph{low} output \emph{result} that indicates the result.
  No information about decisions should be inferred until publication.
  To specify the functionality, we propose the $\ltl$ specification
  \begin{align} \label{eq:example-ltl}
    &\G (\textit{decision} \rightarrow (value \leftrightarrow \X \textit{internal})) \nonumber \\
    \land &\G (\neg\textit{decision} \rightarrow (internal \leftrightarrow \X \textit{internal})) \nonumber \\
    \land &\G (\textit{publish} \rightarrow \X (\textit{internal} \leftrightarrow \textit{result})) \enspace.
  \end{align}
  The solution produced by the LTL synthesis tool BoSy~\cite{conf/cav/FaymonvilleFT17}, shown in Fig.~\ref{fig:bosy-secret-solution}, clearly violates our intention that results should be secret until publish: Whenever a decision is made, the result output is changed as well.
  
  We formalize the property that no information about the decision can be inferred from \emph{result} until publication as the $\hyperltl$ formula
  \begin{equation} \label{eq:example-hyperltl}
    \forall \pi \forall \pi' \ldot (\textit{publish}_\pi \lor \textit{publish}_{\pi'}) \R (\textit{result}_\pi \leftrightarrow \textit{result}_{\pi'}) \enspace.
  \end{equation}
  It asserts that for every pair of traces, the \emph{result} signals have to be the same until (if ever) there is a \emph{publish} signal on either trace.
  A solution satisfying both, the functional specification and the hyperproperty, is shown in Fig.~\ref{fig:bosy-secret-solution}.
  The system switches states whenever there is a decision with a different value than before and only exposes the decision in case there is a prior publish command.
\end{example}
\begin{figure}[t]
	\begin{subfigure}[b]{0.3\textwidth}
		\centering
		\begin{tikzpicture}[->,>=stealth',shorten >=1pt,auto,semithick,scale=1,transform shape,scale=0.8]
  \tikzstyle{state}=[draw,shape=rectangle,inner sep=7pt,minimum width=1cm,align=center]

  \node[state] (s0) {$s_0$\\ $\set{\textit{res}, \textit{int}}$};
  \node[state,right=1.5 of s0] (s1) {$s_1$\\ $\emptyset$};
  
  \draw (s0) edge[<-] +(0,1)
        (s0) edge[loop below] node {$\neg\textit{dec} \lor \textit{val}$} ()
        (s0) edge[bend left] node[anchor=south] {$\textit{dec} \land \neg\textit{val}$} (s1)
        (s1) edge[bend left] node[anchor=north] {$\textit{dec} \land \textit{val}$} (s0)
        (s1) edge[loop below] node {$\neg\textit{dec} \lor \neg\textit{val}$} ()
        ;
  
\end{tikzpicture}
		\caption{}
		\label{fig:bosy-secret}
	\end{subfigure}\hfill
	\begin{subfigure}[b]{0.69\textwidth}
		\centering
		\begin{tikzpicture}[->,>=stealth',shorten >=1pt,auto,semithick,scale=1,transform shape,scale=0.8]
  \tikzstyle{state}=[draw,shape=rectangle,inner sep=3pt,minimum width=1cm,align=center]

  \node[state] (s0) {$s_0$\\ $\set{\textit{res}}$};
  \node[state,below left=2 and 1.5 of s0] (s1) {$s_1$\\ $\set{\textit{res},\textit{int}}$};
  \node[state,below right=2 and 1.5 of s0] (s2) {$s_2$\\ $\emptyset$};
  
  \draw (s0) edge[<-] +(0,1)
        (s0) edge[loop right] node {$\neg\textit{pub} \land (\neg\textit{dec} \lor \neg\textit{val})$} ()
        (s0) edge[bend right=10] node[near end,yshift=5pt] {$\textit{dec} \land \textit{val}$} (s1)
        (s1) edge[bend left=30] node[] {$\textit{dec} \land \neg\textit{val} \land \neg\textit{pub}$} (s0)
        (s1) edge[loop left] node[near start,anchor=north,yshift=-5pt] {$\neg\textit{dec} \lor \textit{val}$} ()
        (s0) edge[bend left=30] node {$\textit{pub} \land (\neg\textit{dec} \lor \neg\textit{val})$} (s2)
        (s2) edge[bend right=10] node[near end,xshift=3pt] {$\neg\textit{pub} \land \neg\textit{val}$} (s0)
        (s2) edge[loop right] node[near end,anchor=north west,align=center,xshift=-15pt] {$\textit{pub} \land \neg\textit{val}$\\${}\lor \neg\textit{dec} \land \textit{val}$} ()
        (s1) edge[bend left=10] node[swap] {$\textit{dec} \land \textit{pub} \land \neg\textit{val}$} (s2)
        (s2) edge[bend left=10] node[] {$\textit{dec} \land \textit{val}$} (s1)
        ;
  
\end{tikzpicture}
		\caption{}
		\label{fig:bosyhyper-secret}
	\end{subfigure}
  \caption{Synthesized solutions for Example~\ref{ex:secret-decision}.}
  \label{fig:bosy-secret-solution}
  \vspace{-10pt}
\end{figure}
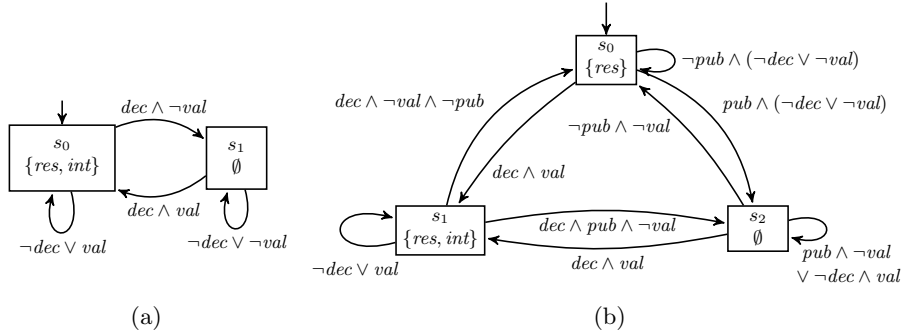
We proceed with introducing the necessary preliminaries for our algorithm.

\vspace{-5pt}\paragraph{Automata.}

A universal co-B\"uchi automaton $\ucw$ over a finite alphabet $\Sigma$ is a tuple $\tuple{Q,q_0,\delta,F}$, where
$Q$ is a finite set of states,
$q_0 \in Q$ is the designated initial state,
$\delta:Q \times 2^\Sigma \times Q$ is the transition relation, and
$F \subseteq Q$ is the set of rejecting states.
Given an infinite word $\sigma = \sigma_0 \sigma_1 \sigma_2 \cdots \in (2^\Sigma)^\omega$, a run of $\sigma$ on $\ucw$ is an infinite path $q_0 q_1 q_2 \dots \in Q^\omega$ where for all $i \geq 0$  it holds that $(q_i,\sigma_i,q_{i+1}) \in \delta$.
A run is accepting, if it contains only finitely many rejecting states.
$\ucw$ accepts a word $\sigma$, if \emph{all} runs of $\sigma$ on $\ucw$ are accepting.
The language of $\ucw$, written $\lang(\ucw)$, is the set $\set{\sigma \in (2^\Sigma)^\omega \mid \ucw \text{ accepts } \sigma}$.

We represent automata as directed graphs with vertex set $Q$ and a symbolic representation of the transition relation $\delta$ as propositional boolean formulas $\bool(\Sigma)$. The rejecting states in $F$ are marked by double lines.
The automata for the $\ltl$ and $\hyperltl$ specifications from Example~\ref{ex:secret-decision} are depicted in Fig.~\ref{fig:example-automata}.

\begin{figure}[t]
	\begin{subfigure}[b]{0.49\textwidth}
		\centering
		\begin{tikzpicture}[->,>=stealth',shorten >=1pt,auto,semithick,scale=1,transform shape,scale=0.8]
  \node[state] (init) {$q_0$};
  \node[state,below left=0 and 2 of init] (left) {$q_1$};
  \node[state,below right=0 and 2 of init] (right) {$q_2$};
  \node[state,below=0.5 of init] (publish) {$q_3$};
  \node[state,accepting,below=0.5 of publish] (error) {$q_e$};
  
  \draw (init) edge[<-] +(-0.75,0.75)
        (init) edge[loop above] node {$\btrue$} ()
        (init) edge node[swap,align=center,near start] {$\neg\mathit{dec} \land \neg \mathit{int}$\\${} \lor \mathit{dec} \land \neg \mathit{val}$} (left)
        (left) edge[bend right=45] node[swap] {$\mathit{int}$} (error)
        (init) edge node[align=center,near start] {$\neg\mathit{dec} \land  \mathit{internal}$\\${} \lor \mathit{dec} \land \mathit{val}$} (right)
        (right) edge[bend left=45] node {$\neg\mathit{int}$} (error)
        (init) edge node {$\mathit{pub}$} (publish)
        (publish) edge node {$(\mathit{int} \nleftrightarrow \mathit{res})$} (error)
        (error) edge[loop below] node {$\btrue$} ()
        ;
\end{tikzpicture}
		\caption{Automaton accepting language defined by $\ltl$ formula in (\ref{eq:example-ltl})}
		\label{fig:example-automaton-linear}
	\end{subfigure}\hfill
	\begin{subfigure}[b]{0.49\textwidth}
		\centering
		\begin{tikzpicture}[->,>=stealth',shorten >=1pt,auto,semithick,scale=1,transform shape,scale=0.9]
  \node[state] (init) {$q_0$};
  \node[state,accepting,right=2 of init] (error) {$q_e$};
  
  \draw (init) edge[<-] +(-0.75,0.75)
        (init) edge[loop above,align=center] node {$\neg \mathit{pub}_\pi \land \neg \mathit{pub}_{\pi'}$\\${}\land \mathit{res}_\pi \leftrightarrow \mathit{res}_{\pi'}$} ()
        (init) edge node[swap] {$\mathit{res}_\pi \nleftrightarrow \mathit{res}_{\pi'}$} (error)
        (error) edge[loop below] node {$\btrue$} ()
        ;
\end{tikzpicture}
		\caption{Automaton accepting language defined by $\hyperltl$ formula in (\ref{eq:example-hyperltl})}
		\label{fig:example-automaton-hyper}
	\end{subfigure}
	\caption{Universal co-B\"uchi automata recognizing the languages from Example~\ref{ex:secret-decision}.}
	\label{fig:example-automata}
	\vspace{-10pt}
\end{figure}
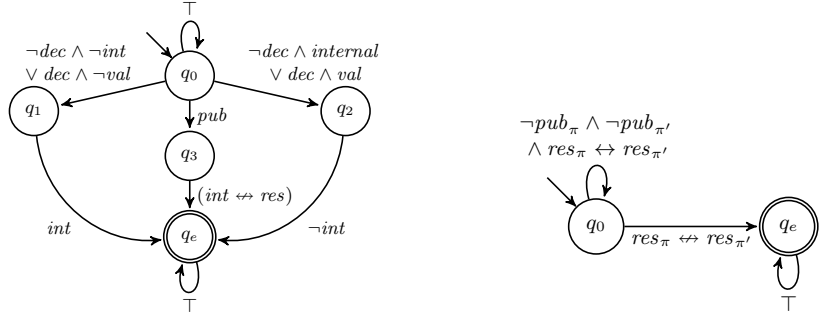

\vspace{-5pt}\paragraph{Run graph.}

The run graph of a transition system $\tsys = \tuple{S,s_0,\tau,l}$ on a universal co-B\"uchi automaton $\ucw = \tuple{Q,q_0,\delta,F}$ is a directed graph $\tuple{V,E}$ where
$V = S \times Q$ is the set of vertices and
$E \subseteq V \times V$ is the edge relation with
\begin{align*}
    ((s, q),(s',q')) \in E \;\text{ iff }\;\\
    \exists i \in 2^I \ldot \exists o \in 2^O \ldot (\tau(s,i) = s') \land (l(s) = o) \land (q,i \cup o,q') \in \delta \enspace .
\end{align*}
A run graph is accepting if every path (starting at the initial vertex $(s_0,q_0)$) has only finitely many visits of rejecting states.
To show acceptance, we annotate every reachable node in the run graph with a natural number $m$, such that any path, starting in the initial state, contains less than $m$ visits of rejecting states.
Such an annotation exists if, and only if, the run graph is accepting~\cite{journals/sttt/FinkbeinerS13}.

\vspace{-5pt}\paragraph{Self-composition.}

The model checking of universal $\hyperltl$ formulas~\cite{conf/cav/FinkbeinerRS15} is based on self-composition.
Let $\prj_i$ be the projection to the $i$-th element of a tuple.
Let $\zip$ denote the usual function that maps a $n$-tuple of sequences to a single sequence of $n$-tuples, for example, $zip([1, 2, 3], [4, 5, 6]) = [(1, 4), (2, 5), (3, 6)]$, and let $\unzip$ denote its inverse.
The transition system $\tsys^n$ is the $n$-fold self-composition of $\tsys = \tuple{S,s_0,\tau,l}$, if $\tsys^n = \tuple{S^n, s_0^n, \tau', l^n}$ and for all $s, s' \in S^n$, $\alpha \in (\pow{I})^n$, and $\beta \in (\pow{O})^n$ we have that $\tau'(s,\alpha) = s'$ and $l^n(s) = \beta$ iff for all $1 \leq i \leq n$, it hold that $\tau(\prj_i(s),\prj_i(\alpha)) = \prj_i(s')$ and $l(\prj_i(s)) = \prj_i(\beta)$.
If $T$ is the set of traces generated by $\tsys$, then $\set{ \zip(t_1,\dots,t_n) \mid t_1,\dots,t_n \in T }$ is the set of traces generated by $\tsys^n$.


We construct the universal co-B\"uchi automaton $\ucw_\psi$ such that the language of $\ucw_\psi$ is the set of words $w$ such that $\unzip(w) = \Pi$ and $\Pi \models_\emptyset \psi$, i.e., the tuple of traces that satisfy $\psi$.
We get this automaton by dualizing the non-deterministic B\"uchi automaton for $\neg\psi$~\cite{conf/post/ClarksonFKMRS14}, i.e., changing the branching from non-deterministic to universal and the acceptance condition from B\"uchi to co-B\"uchi.
Hence, $\tsys$ satisfies a universal $\hyperltl$ formula $\varphi = \forall \pi_1 \dots \forall \pi_k \ldot \psi$ if the traces generated by self-composition $\tsys^n$ are a subset of $\lang(\ucw_\psi)$.

\begin{lemma}
    A transition system $\tsys$ satisfies the universal $\hyperltl$ formula $\varphi = \forall \pi_1 \cdots \forall \pi_n \ldot \psi$, if the run graph of $\tsys^n$ and $\ucw_\psi$ is accepting.
\end{lemma}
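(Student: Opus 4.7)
The plan is to reduce $\tsys \models \varphi$ to a language-containment statement, namely that every trace of $\tsys^n$ lies in $\lang(\ucw_\psi)$, and then to extract this containment from the acceptance of the run graph by matching its paths with runs of $\ucw_\psi$ on traces of $\tsys^n$.

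First, by the characterization of the $n$-fold self-composition spelled out just above, the traces generated by $\tsys^n$ are exactly the words $\zip(t_1,\ldots,t_n)$ where $t_1,\ldots,t_n$ are traces of $\tsys$. Combined with the construction of $\ucw_\psi$ by dualization of the nondeterministic B\"uchi automaton for $\neg\psi$, whose language is $\{ w \mid \unzip(w) \models_\emptyset \psi \}$, the claim $\tsys \models \varphi$ becomes equivalent to the assertion that every trace of $\tsys^n$ is accepted by $\ucw_\psi$.

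Second, I fix an arbitrary trace $w = w_0 w_1 w_2 \cdots$ of $\tsys^n$, which determines a unique state sequence $s^{(0)} s^{(1)} \cdots$ with $s^{(0)} = s_0^n$, together with an arbitrary run $q_0 q_1 q_2 \cdots$ of $\ucw_\psi$ on $w$. The interleaved sequence $(s^{(0)}, q_0)(s^{(1)}, q_1)(s^{(2)}, q_2)\cdots$ is then a path in the run graph starting at $(s_0^n, q_0)$: at each step the input and output components of $w_j$ witness the existentials in the edge condition, and $(q_j, w_j, q_{j+1}) \in \delta$ supplies the chosen automaton branch. By hypothesis, every path in the run graph visits rejecting vertices $\{ (s, q) \mid q \in F \}$ only finitely often, so the run is accepting. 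Since the run was arbitrary and $\ucw_\psi$ is \emph{universal}, all runs on $w$ are accepting, hence $w \in \lang(\ucw_\psi)$, and combining with the first step yields $\tsys \models \varphi$.

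The main delicate point is the bookkeeping in the second step: one has to verify that \emph{every} nondeterministic choice of automaton transition on \emph{every} trace of $\tsys^n$ gives rise to some path in the run graph, so that the universal acceptance condition of $\ucw_\psi$ matches exactly the universal quantification over run-graph paths that the hypothesis provides. Apart from this straightforward unfolding, no further machinery is needed, since both the language characterization of $\ucw_\psi$ and the trace characterization of $\tsys^n$ are already in place.
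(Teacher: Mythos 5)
Your proof is correct and follows exactly the route the paper intends: the paper states this lemma without an explicit proof, relying on the immediately preceding observations that the traces of $\tsys^n$ are the zipped $n$-tuples of traces of $\tsys$ and that $\lang(\ucw_\psi)$ captures the tuples satisfying $\psi$, together with the standard bounded-synthesis fact that acceptance of the run graph certifies the language containment. Your second step, matching each run of the universal automaton on each trace of $\tsys^n$ to a path of the run graph from the initial vertex, is precisely the detail the paper leaves implicit, and you handle it correctly.
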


\vspace{-5pt}\paragraph{Synthesis.}

Let $\tsys = \tuple{S,s_0,\tau,l}$ and $\ucw_\psi = \tuple{Q,q_0,\delta,F}$.
We encode the synthesis problem as an SMT constraint system.
Therefore, we use uninterpreted function symbols to encode the transition system and the annotation.
For the transition system, those functions are the transition function $\tau : S \times \pow{I} \rightarrow S$ and the labeling function $l: S \rightarrow \pow{O}$.
The annotation is split into two parts, a reachability constraint $\lambda^\bool : S^n \times Q \rightarrow \bool$ indicating whether a state in the run graph is reachable and a counter $\lambda^\# : S^n \times Q \rightarrow \nat$ that maps every reachable vertex to the maximal number of rejecting states visited by any path starting in the initial vertex.
The resulting constraint asserts that there is a transition system with accepting run graph.
\begin{align*}
  &\forall s, s' \in S^n \ldot
   \forall q, q' \in Q \ldot
   \forall i \in (\pow{I})^n \ldot\\
  &\left(
  \lambda^\bool(s, q) \land
  \tau'(s, i) = s' \land
  (q, i \cup l(s), q') \in \delta
  \right)
  \rightarrow \lambda^\bool(s', q') \land
  \lambda^\#(s', q') \trianglerighteq \lambda^\#(s, q)
\end{align*}
where $\trianglerighteq$ is $>$ if $q' \in F$ and $\geq$ otherwise.

\begin{theorem}
  The constraint system is satisfiable with bound $b$ if, and only if, there is a transition system $\tsys$ of size $b$ that realizes the $\hyperltl$ formula.
\end{theorem}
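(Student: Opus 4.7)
My plan is to prove both directions by unfolding the correspondence between an SMT model and the pair (transition system, annotation), invoking the preceding lemma that reduces $\tsys \models \varphi$ to acceptance of the run graph of $\tsys^n$ on $\ucw_\psi$, and invoking the cited characterization that a run graph is accepting if and only if an annotation of the required form exists.

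For the ($\Leftarrow$) direction, I would start from a transition system $\tsys = \tuple{S, s_0, \tau, l}$ of size $b$ that realizes $\varphi$. By the preceding lemma, the run graph of the $n$-fold self-composition $\tsys^n$ on $\ucw_\psi$ is accepting, so by the characterization of~\cite{journals/sttt/FinkbeinerS13} there exist functions $\lambda^\bool : S^n \times Q \to \bool$ and $\lambda^\# : S^n \times Q \to \nat$ with $\lambda^\bool(s_0^n, q_0) = \btrue$ such that for every edge $((s,q),(s',q'))$ of the run graph with $\lambda^\bool(s,q) = \btrue$ we have $\lambda^\bool(s',q') = \btrue$ and $\lambda^\#(s',q') \trianglerighteq \lambda^\#(s,q)$. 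Interpreting the uninterpreted function symbols $\tau$, $l$, $\lambda^\bool$, $\lambda^\#$ in the obvious way yields an SMT model of size $b$: the implication in the constraint is vacuously true when $\lambda^\bool(s,q) = \bfalse$ and is witnessed by the annotation otherwise.

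For the ($\Rightarrow$) direction, suppose the constraint system is satisfiable with bound $b$. The model provides interpretations of $\tau$ and $l$ on a domain $S$ of size $b$, hence a transition system $\tsys = \tuple{S, s_0, \tau, l}$, together with $\lambda^\bool$ and $\lambda^\#$. The constraint directly says that the set $\{(s,q) \mid \lambda^\bool(s,q) = \btrue\}$ is closed under successors in the run graph of $\tsys^n$ on $\ucw_\psi$ and that $\lambda^\#$ strictly increases on this set when a rejecting state is entered. Since $(s_0^n, q_0)$ is reachable and lies in this closed set, every reachable vertex of the run graph lies in it, and no path from the initial vertex can visit more than $\lambda^\#$-many rejecting states; hence the run graph is accepting. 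Applying the preceding lemma gives $\tsys \models \varphi$, so $\tsys$ realizes $\varphi$.

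The only subtlety I anticipate is handling the initialization: the stated constraint is a pure implication without an explicit base clause asserting $\lambda^\bool(s_0^n, q_0) = \btrue$, so I would either assume this as an implicit side condition of the encoding (as is standard in bounded synthesis) or make it explicit in the proof by adding the trivial base atom. Once that is pinned down, the remainder reduces to routine bookkeeping between the symbolic constraint and the graph-theoretic annotation characterization.
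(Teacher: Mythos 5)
Your proof is essentially the intended argument: the paper itself states this theorem without proof, and the standard bounded-synthesis correctness argument you give (translating between SMT models and annotated run graphs via the annotation characterization of~\cite{journals/sttt/FinkbeinerS13}) is exactly what the construction is designed to support. You are also right to flag the missing base clause $\lambda^\bool(s_0^n,q_0)$; without it the constraint is trivially satisfiable by $\lambda^\bool \equiv \bfalse$, so the theorem as literally displayed needs that implicit side condition, and your patch is the correct one. One small point to tighten: for the direction from a realizing $\tsys$ to a satisfying model you invoke ``the preceding lemma,'' but that lemma is stated only as a sufficient condition (run graph accepting $\Rightarrow$ $\tsys \models \varphi$); you need the converse, which does hold because every infinite path of the run graph from $(s_0^n,q_0)$ is a run of $\ucw_\psi$ on some word generated by $\tsys^n$, and $\tsys \models \varphi$ forces all such words into $\lang(\ucw_\psi)$, hence all runs accepting --- but you should say this rather than attribute it to the lemma as stated.
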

We extract a realizing implementation by asking the satisfiability solver to generate a model for the uninterpreted functions that encode the transition system. 

\section{Bounded Unrealizability} \label{sec:bounded-unrealizability}

So far, we focused on the positive case, providing an algorithm for finding small solutions, if they exist.
In this section, we shift to the case of detecting if a universal $\hyperltl$ formula is unrealizable.
We adapt the definition of counterexamples to realizability for $\ltl$~\cite{conf/tacas/FinkbeinerT14} to $\hyperltl$ in the following.
Let $\varphi$ be a universal $\hyperltl$ formula $\forall \pi_1 \cdots \forall \pi_n \ldot \psi$ over inputs $I$ and outputs $O$, a \emph{counterexample to realizability} is a set of input traces $\cexpaths \subseteq (2^I)^\omega$ such that for every strategy $f: \strat{I}{O}$ the labeled traces $\cexpaths^f \subseteq (2^{I \cup O})^\omega$ satisfy $\neg \varphi = \exists \pi_1 \cdots \exists \pi_n \ldot \neg\psi$.

\begin{proposition}
  A universal $\hyperltl$ formula $\varphi = \forall \pi_1 \cdots \forall \pi_n \ldot \psi$ is unrealizable if there is a counterexample $\cexpaths$ to realizability.
\end{proposition}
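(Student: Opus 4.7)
The statement is essentially a soundness claim about the notion of counterexample just defined: every strategy, when run on the inputs from $\cexpaths$, produces a set of labeled traces that already witnesses $\neg\varphi$. The plan is to unpack the definitions and push the existential witnesses from $\cexpaths^f$ up to the full trace set of $f$.

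First, fix an arbitrary strategy $f \colon \strat{I}{O}$; our goal is to show $f \nmodels \varphi$, which by universality of $f$ over strategies yields unrealizability of $\varphi$. The key observation is that $\cexpaths^f$, the set of labelings of the input streams in $\cexpaths$ by~$f$, is a subset of the full trace set generated by~$f$, namely $\set{w \mid w \in f}$. This follows directly from the definition of a strategy tree: for any input word $\alpha \in \cexpaths \subseteq (\pow{I})^\omega$, the labeled trace $(\alpha_i \cup f(\alpha_{<i}))_{i \geq 0}$ is by definition an element of $\set{w \mid w \in f}$.

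Next, I would use the counterexample hypothesis: $\cexpaths^f \models \neg\varphi$, i.e., $\cexpaths^f \models \exists \pi_1 \cdots \exists \pi_n \ldot \neg\psi$. By the semantics of the existential quantifier there exist traces $t_1, \dots, t_n \in \cexpaths^f$ such that the assignment $\pathassign = [\pi_1 \mapsto t_1, \dots, \pi_n \mapsto t_n]$ satisfies $\neg\psi$ (as a pure LTL body, whose satisfaction does not depend on the surrounding trace set). By the inclusion above, these same traces lie in $\set{w \mid w \in f}$, so by the semantics of $\exists$ again we get $\set{w \mid w \in f} \models \neg\varphi$, that is $f \nmodels \varphi$.

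Since $f$ was arbitrary, no strategy realizes $\varphi$, proving that $\varphi$ is unrealizable. The only real content is the monotonicity of existential $\hyperltl$ under superset of the trace set, together with the inclusion $\cexpaths^f \subseteq \set{w \mid w \in f}$; there is no genuine obstacle, as the proposition is essentially a direct consequence of how $\cexpaths^f$ was defined. The statement should be read as a definitional sanity check that justifies the search for bounded counterexamples in the rest of the section.
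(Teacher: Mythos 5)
Your proof is correct and takes essentially the same route as the paper's: both hinge on the inclusion $\cexpaths^f \subseteq \set{w \mid w \in f}$ and the monotonicity of the (anti-)quantifier structure to transfer the violation from $\cexpaths^f$ to the full trace set of $f$. The only differences are cosmetic—you argue directly where the paper argues by contradiction, and you spell out the inclusion that the paper compresses into ``since $\varphi$ is universal, we can defer $f \nmodels \varphi$.''
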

\begin{proof}
  For contradiction, we assume $\varphi$ is realizable by a strategy $f$.
  As $\cexpaths$ is a counterexample to realizability, we know $\cexpaths^f \models \exists \pi_1 \cdots \exists \pi_n \ldot \neg\psi$.
  This means that there exists an assignment $\pathassign_\cexpaths \in \pathvars \to \cexpaths^f$ with $\pathassign_\cexpaths \models_{\cexpaths^f} \neg \psi$.
  Equivalently $\pathassign_\cexpaths \nmodels_{\cexpaths^f} \psi$.
  Therefore, not all assignments $\pathassign \in \pathvars \to \cexpaths^f$ satisfy $\pathassign \models_{\cexpaths^f} \psi$.
  Which implies $\cexpaths^f \nmodels \forall \pi_1 \cdots \forall \pi_n \ldot \psi = \varphi$.
  Since $\varphi$ is universal, we can defer $f \nmodels \varphi$, which concludes the contradiction.
  Thus, $\varphi$ is unrealizable.
\end{proof}

Despite being independent of strategy trees, there are in many cases finite representations of $\cexpaths$.
Consider, for example, the unrealizable specification $\varphi_1 = \forall \pi \forall \pi' \ldot \F (i_\pi \leftrightarrow i_{\pi'})$, where the set $\cexpaths_1 = \set{ \emptyset^\omega, \set{i}^\omega }$ is a counterexample to realizability.
As a second example, consider $\varphi_2 = \forall \pi \forall \pi' \ldot \G (o_\pi \leftrightarrow o_{\pi'}) \land \G (i_\pi \leftrightarrow \X o_\pi)$ with conflicting requirements on $o$.
$\cexpaths_1$ is a counterexample to realizability for $\varphi_2$ as well: By choosing a different valuation of $i$ in the first step, the system is forced to either react with different valuations of $o$ (violating first conjunct), or not correctly repeating the initial value of $i$ (violating second conjunct).

There are, however, already linear specifications where the set of counterexample paths is not finite and depends on the strategy tree~\cite{journals/corr/FinkbeinerT15}.
For example, the specification $\forall \pi \ldot \F (i_\pi \leftrightarrow o_\pi)$ is unrealizable as the system cannot predict future values of the environment.
There is no finite set of traces witnessing this: For every finite set of traces, there is a strategy tree such that $\F (i_\pi \leftrightarrow o_\pi)$ holds on every such trace.
On the other hand, there is a simple \emph{counterexample strategy}, that is a strategy that observes output sequences and produces inputs. 
In this example, the counterexample strategy inverts the outputs given by the system, thus it is guaranteed that $\G (i \nleftrightarrow o)$ for any system strategy.


We combine those two approaches, selecting counterexample paths and using strategic behavior.
A $k$-counterexample strategy for $\hyperltl$ observes $k$ output sequences and produces $k$ inputs, where $k$ is a new parameter ($k \geq n$).
The counterexample strategy is winning if (1) either the traces given by the system player do not correspond to a strategy, or (2) the body of the $\hyperltl$ is violated for any $n$ subset of the $k$ traces.
Regarding property (1), consider the two traces where the system player produces different outputs initially.
Clearly, those two traces cannot be generated by any system strategy since the initial state (root labeling) is fixed.

The search for a $k$-counterexample strategy can be reduced to LTL synthesis using $k$-tuple input propositions $O^k$, $k$-tuple output propositions $I^k$, and the specification
\begin{equation*}
  \neg\dep{I^k}{O^k} \lor \bigvee_{P \subseteq \set{1,\dots,k} \text{ with } \card{P} = n} \neg\psi[P] \enspace,
\end{equation*}
where $\psi[P]$ denotes the replacement of $a_{\pi_i}$ by the $P_i$th position of the combined input/output $k$-tuple.

\begin{theorem}
  A universal $\hyperltl$ formula $\varphi = \forall \pi_1 \cdots \forall \pi_n \ldot \psi$ is unrealizable if there is a $k$-counterexample strategy for some $k \geq n$.
\end{theorem}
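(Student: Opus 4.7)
The plan is to prove the contrapositive: assume $\varphi$ is realizable by some strategy $f\colon (2^I)^* \to 2^O$, and show that no $k$-counterexample strategy can be winning. Equivalently, I will derive a contradiction from the assumption that both $f$ realizes $\varphi$ \emph{and} some $k$-counterexample strategy $g$ exists.

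First I would set up the interaction between $g$ and $f$. The strategy $g$ reads sequences in $(2^{O^k})^*$ and produces letters in $2^{I^k}$, while $f$ reads sequences in $(2^I)^*$ and produces letters in $2^O$. I would play them against each other in lock-step, building $k$ traces simultaneously: at round $r$, if $g$ emits $(i_1^r,\dots,i_k^r) \in 2^{I^k}$, then on the $j$-th component the system answers with $o_j^r = f(i_j^0 i_j^1 \cdots i_j^{r-1})$, and these $k$ output letters are fed back to $g$ as the single letter $(o_1^r,\dots,o_k^r) \in 2^{O^k}$. Iterating this yields a run that corresponds to $k$ labeled traces $t_1,\dots,t_k$, each of which belongs to $f$ by construction.

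The second step is to verify that the first disjunct of $g$'s winning condition fails on this run. Because all $k$ output components are produced by the \emph{same} deterministic strategy $f$, whenever the input projections of components $p$ and $q$ agree on a prefix, the output projections of components $p$ and $q$ agree on that prefix as well. This is exactly the content of $\dep{I^k}{O^k}$ (interpreted pairwise over the $k$ components). Hence $\neg\dep{I^k}{O^k}$ does not hold along the generated run.

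Since $g$ is winning, the disjunction forces some subset $P \subseteq \{1,\dots,k\}$ with $\card{P}=n$ such that $\neg\psi[P]$ holds on the combined trace. Unzipping gives the assignment $\pathassign$ with $\pathassign(\pi_i) = t_{P_i}$ for $1 \le i \le n$. Each $t_{P_i}$ lies in $f$, so $\set{t \mid t \in f}$ admits an assignment falsifying $\psi$, i.e., $f \nmodels \forall \pi_1 \cdots \forall \pi_n \ldot \psi = \varphi$, contradicting realizability. Composing with the earlier proposition about counterexamples, one could alternatively extract the set $\cexpaths = \{t_j|_I \mid 1 \le j \le k\}$ as a ``counterexample scheme'' witnessing unrealizability for every system strategy $f$ via the same argument.

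The main obstacle I expect is bookkeeping around the formal meaning of $\dep{I^k}{O^k}$ as an LTL property over $k$-tuple propositions (rather than as the HyperLTL operator defined earlier), and carefully checking that the product construction really enforces that the $k$ outputs are consistent with a single strategy. Once this alignment between the ``$k$-copy'' LTL encoding and the pairwise input-determinism semantics is pinned down, the remainder is essentially routine unpacking of the definitions.
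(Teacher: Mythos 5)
Your argument is correct, and it is exactly the argument the construction in this section is designed to support --- the paper in fact states this theorem without a proof, so you are filling in the omitted justification rather than diverging from it. You correctly identify the two essential steps: playing the $k$-counterexample strategy against a hypothetical realizing strategy $f$ yields $k$ labeled paths of $f$, on which $\dep{I^k}{O^k}$ necessarily holds (so the first disjunct of the winning condition is unavailable), forcing some $\neg\psi[P]$ with $\card{P}=n$ and hence an assignment of traces of $f$ falsifying $\psi$, contradicting $f \models \varphi$; the Moore/Mealy bookkeeping you flag is indeed the only delicate point and is handled as you describe.
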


\section{Evaluation} \label{sec:evaluation}

We implemented a prototype synthesis tool, called $\tool$\footnote{$\tool$ is available at \url{https://www.react.uni-saarland.de/tools/bosy/}}, for universal $\hyperltl$ based on the bounded synthesis algorithm described in Sec.~\ref{sec:bounded-realizability}.
Furthermore, we implemented the search for counterexamples proposed in Sec.~\ref{sec:bounded-unrealizability}.
Thus, $\tool$ is able to characterize realizability and unrealizability of universal $\hyperltl$ formulas.

We base our implementation on the LTL synthesis tool BoSy~\cite{conf/cav/FaymonvilleFT17}.
For efficiency, we split the specifications into two parts, a part containing the linear (LTL) specification, and a part containing the hyperproperty given as $\hyperltl$ formula.
Consequently, we build two constraint systems, one using the standard bounded synthesis approach~\cite{journals/sttt/FinkbeinerS13} and one using the approach described in Sec.~\ref{sec:bounded-realizability}.
Before solving, those constraints are combined into a single SMT query.
This results in a much more concise constraint system compared to the one where the complete specification is interpreted as a $\hyperltl$ formula.
For solving the SMT queries, we use the Z3 solver~\cite{conf/tacas/MouraB08}.
We continue by describing the benchmarks used in our experiments.

\vspace{-5pt}\paragraph{Symmetric mutual exclusion.}

Our first example demonstrates the ability to specify symmetry in $\hyperltl$ for a simple mutual exclusion protocol.
Let $r_1$ and $r_2$ be input signals representing mutual exclusive \emph{requests} to a critical section and $g_1$/$g_2$ the respective grant to enter the section.
Every request should be answered eventually $\G (r_i \rightarrow \F g_i)$ for $i \in \set{1,2}$, but not at the same time $\G \neg(g_1 \land g_2)$.
The minimal LTL solution is depicted in Fig.~\ref{fig:mutex-ltl}.
It is well known that no mutex protocol can ensure perfect symmetry~\cite{books/daglib/0080029}, thus when adding the symmetry constraint specified by the $\hyperltl$ formula $\forall \pi \forall \pi' \ldot ({r_1}_\pi \nleftrightarrow {r_2}_{\pi'}) \R ({g_1}_\pi \leftrightarrow {g_2}_{\pi'})$ the formula becomes unrealizable.
Our tool produces the counterexample shown in Fig.~\ref{fig:mutex-symmetric-counterexample}.
By adding another input signal \emph{tie} that breaks the symmetry in case of simultaneous requests and modifying the symmetry constraint $\forall \pi \forall \pi' \ldot \left( ({r_1}_\pi \nleftrightarrow {r_2}_{\pi'}) \lor (\mathit{tie}_\pi \nleftrightarrow \neg \mathit{tie}_{\pi'}) \right) \R ({g_1}_\pi \leftrightarrow {g_2}_{\pi'})$ we obtain the solution depicted in Fig.~\ref{fig:mutex-symmetry-tie}.
We further evaluated the same properties on a version that forbids spurious grants, which are reported in Table~\ref{tbl:results} with prefix \emph{full}.

\begin{figure}[t]
  \begin{subfigure}[b]{0.3\textwidth}
    \centering
    \begin{tikzpicture}[->,>=stealth',shorten >=1pt,auto,node distance=1cm,semithick,scale=1,transform shape,scale=.8]

  \node[state] (init) {$g_1$};
  \node[state,right=of init] (other) {$g_2$};
  
  \draw (init) edge[<-] +(0,1)
        (init) edge[bend left] node {$\btrue$} (other)
        (other) edge[bend left] node {$\btrue$} (init)
        ;
  
\end{tikzpicture}
    \caption{Non-symmetric solution}
    \label{fig:mutex-ltl}
  \end{subfigure}
  \begin{subfigure}[b]{0.3\textwidth}
    \centering
    \begin{tikzpicture}[->,>=stealth',shorten >=1pt,auto,node distance=1cm,semithick,scale=1,transform shape,scale=.8]

  \node[state] (init) {};
  
  \draw (init) edge[<-] +(0,1)
        (init) edge[loop right] node {$\btrue/{r_1}_\pi {r_2}_\pi {r_1}_{\pi'} {r_2}_{\pi'}$} ()
        ;
  
  \node[below=17pt] {}; 
  
\end{tikzpicture}
    \caption{Counterexample to symmetry}
    \label{fig:mutex-symmetric-counterexample}
  \end{subfigure}
  \begin{subfigure}[b]{0.3\textwidth}
    \centering
    \begin{tikzpicture}[->,>=stealth',shorten >=1pt,auto,node distance=1cm,semithick,scale=1,transform shape,scale=.8]

  \node[state] (init) {};
  \node[state,below left=0.7 and 0.5 of init] (g1) {$g_1$};
  \node[state,below right=0.7 and 0.5 of init] (g2) {$g_2$};
  
  \draw (init) edge[<-] +(-1,0)
        (init) edge node[swap] {$\mathit{tie}$} (g1)
        (init) edge node {$\neg\mathit{tie}$} (g2)
        (g1) edge[bend left=20] node {$\btrue$} (g2)
        (g2) edge[bend left=20] node {$\btrue$} (g1)
        ;
  
\end{tikzpicture}
    \caption{Symmetry breaking solution}
    \label{fig:mutex-symmetry-tie}
  \end{subfigure}
  \caption{Synthesized solution of the mutual exclusion protocols.}
  \vspace{-10pt}
\end{figure}
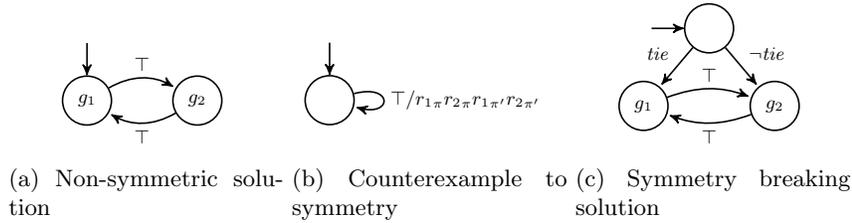

\vspace{-5pt}\paragraph{Distributed and fault-tolerant systems.}

In Sec.~\ref{sec:decidability} we have shown how to encode arbitrary distributed architectures in $\hyperltl$.
As an example for our evaluation, consider a setting with two processes, one for \emph{encoding} input signals and one for \emph{decoding} the encoded signals.
Both processes can be synthesized simultaneously using a single $\hyperltl$ specification.
The (linear) correctness condition states that the decoded signal is always equal to the inputs given to the encoder.
Furthermore, the encoder and decoder should solely depend on the inputs and the encoded signal, respectively.
Additionally, we can specify desired properties about the encoding like fault-tolerance~\cite{journals/corr/FinkbeinerT15} or Hamming distance of code words~\cite{conf/cav/FinkbeinerRS15}.
The results are reported in Table~\ref{tbl:results} where $i$-$j$-$x$ means $i$ input bits, $j$ encoded bits, and $x$ represents the property.
The property is either tolerance against a single Byzantine signal failure or a guaranteed Hamming distance of code words.

%

\vspace{-5pt}\paragraph{CAP Theorem.}

The CAP Theorem due to Brewer~\cite{conf/podc/Brewer00} states that it is impossible to design a distributed system that provides Consistency, Availability, and Partition tolerance (CAP) simultaneously.
This example has been considered before~\cite{journals/corr/FinkbeinerT15} to evaluate a technique that could automatically detect unrealizability.
However, when we drop either Consistency, Availability, or Partition tolerance, the corresponding instances (AP, CP, and CA) become realizable, which the previous work was not able to prove.
We show that our implementation can show both, unrealizability of CAP and realizability of AP, CP, and CA.
In contrast to the previous encoding~\cite{journals/corr/FinkbeinerT15} we are not limited to acyclic architectures.

\vspace{-5pt}\paragraph{Long-term information-flow.}

Previous work on model-checking hyperproperties~\cite{conf/cav/FinkbeinerRS15} found that an implementation for the commonly used \emph{I2C} bus protocol could remember input values ad infinitum.
For example, it could not be verified that information given to the implementation eventually leaves it, i.e., is forgotten.
This is especially unfortunate in high security contexts.
We consider a simple bus protocol which is inspired by the widely used \emph{I2C} protocol.
%
Our example protocol has the inputs \emph{send} for initiating a transmission, \emph{in} for the value that should be transferred, and an \emph{ack}nowledgment bit indicating successful transmission.
The bus master waits in an \emph{idle} state until a \emph{send} is received.
Afterwards, it transmits a header sequence, followed by the value of \emph{in}, waits for an acknowledgement and then indicates \emph{success} or \emph{failure} to the sender before returning to the idle state.
We specify the property that the \emph{in}put has no influence on the \emph{data} that is send, which is obviously violated (instance NI1).
As a second property, we check that this information leak cannot happen arbitrary long (NI2) for which there is a realizing implementation.

\vspace{-5pt}\paragraph{Dining Cryptographers.}
Recap the dining cryptographers problem introduced earlier.
This benchmark is interesting as it contains two types of hyperproperties.
First, there is information-flow between the three cryptographers, where some secrets ($s_{ab}, s_{ac}, s_{bc}$) are shared between pairs of cryptographers.
In the formalization, we have 4 entities: three processes describing the 3 cryptographers ($\textit{out}_i$) and one process computing the result ($p_g$), i.e., whether the group has paid or not, from $\textit{out}_i$.
Second, the final result should only disclose whether one of the cryptographers has paid or the NSA.
This can be formalized as a indistinguishability property between different executions.
For example, when we compare the two traces $\pi$ and $\pi'$ where $C_a$ has paid on $\pi$ and $C_b$ has paid on $\pi'$.
Then the outputs of both have to be the same, if their common secret $s_{ab}$ is different on those two traces (while all other secrets $s_{ac}$ and $s_{bc}$ are the same).
This ensures that from an outside observer, a flipped output can be either result of a different shared secret or due to the announcement.
Lastly, the linear specification asserts that $p_g \leftrightarrow \neg p_\textit{NSA}$.

\vspace{-5pt}\paragraph{Results.}

Table~\ref{tbl:results} reports on the results of the benchmarks.
We distinguish between state-labeled (\emph{Moore}) and transition-labeled (\emph{Mealy}) transition systems.
Note that the counterexample strategies use the opposite transition system, i.e., a Mealy system strategy corresponds to a state-labeled (Moore) environment strategy.
Typically, Mealy strategies are more compact, i.e., need smaller transition systems and this is confirmed by our experiments.
BoSyHyper is able to solve most of the examples, providing realizing implementations or counterexamples.
Regrading the unrealizable benchmarks we observe that usually two simultaneously generated paths ($k=2$) are enough with the exception of the encoder example.
Overall the results are encouraging showing that we can solve a variety of instances with non-trivial information-flow.

\begin{table}[t]
  \caption{Results of BoSyHyper on the benchmarks sets described in Sec.~\ref{sec:evaluation}. They ran on a machine with a dual-core Core i7, 3.3\,GHz, and 16\,GB memory. }
  \label{tbl:results}
  \centering
  \scalebox{0.9}{
  \begin{tabular}{lllrrrr}
    \hline\noalign{\smallskip}
    Benchmark & Instance & Result & \multicolumn{2}{c}{States} & \multicolumn{2}{c}{Time[sec.]} \\
     &&& Moore & Mealy &\quad Moore& Mealy \\
    \noalign{\smallskip}\hline\hline\noalign{\smallskip}
    \multirow{6}{*}{Symmetric Mutex}
     & non-sym      & realizable           & 2 & 2 & 1.4 & 1.3 \\
     & sym          & unrealizable ($k=2$) & 1 & 1 & 1.9 & 2.0 \\
     & tie          & realizable           & 3 & 3 & 1.7 & 1.6 \\
     & full-non-sym & realizable           & 4 & 4 & 1.4 & 1.4 \\
     & full-sym     & unrealizable ($k=2$) & 1 & 1 & 4.3 & 6.2 \\
     & full-tie     & realizable           & 9 & 5 & 1\,802.7 & 5.2 \\
    \noalign{\smallskip}\hline\noalign{\smallskip}
    \multirow{6}{*}{Encoder/Decoder}
     & 1-2-hamming-2      & realizable           & 4  & 1 & 1.6      & 1.3 \\
     & 1-2-fault-tolerant & unrealizable ($k=2$) & 1  & - & 54.9     & - \\
     & 1-3-fault-tolerant & realizable           & 4  & 1 & 151.7    & 1.7 \\
     & 2-2-hamming-2      & unrealizable $(k=3)$ & -  & 1 & -        & 10.6 \\
     & 2-3-hamming-2      & realizable           & 16 & 1 & $> 1\,h$ & 1.5 \\
     & 2-3-hamming-3      & unrealizable $(k=3)$ & -  & 1 & -        & 126.7 \\
    \noalign{\smallskip}\hline\noalign{\smallskip}
    \multirow{8}{*}{CAP Theorem}
     & cap-2-linear & realizable & 8 & 1 & 7.0 & 1.3 \\
     & cap-2        & unrealizable $(k=2)$ & 1 & - & 1\,823.9 & - \\
     & ca-2         & realizable           & - & 1 & - & 4.4 \\
     & ca-3         & realizable           & - & 1 & - & 15.0 \\
     & cp-2         & realizable           & 1 & 1 & 1.8 & 1.6 \\
     & cp-3         & realizable           & 1 & 1 & 3.2 & 10.6 \\
     & ap-2         & realizable           & - & 1 & - & 2.0 \\
     & ap-3         & realizable           & - & 1 & - & 43.4 \\
    \noalign{\smallskip}\hline\noalign{\smallskip}
    \multirow{2}{*}{Bus Protocol}
     & NI1 & unrealizable $(k=2)$ & 1 & 1 & 75.2 & 69.6 \\
     & NI2 & realizable           & 8 & 8 & 24.1 & 33.9 \\
    \noalign{\smallskip}\hline\noalign{\smallskip}
    Dining Cryptographers & secrecy & realizable & - & 1 & - & 82.4 \\
    \hline\noalign{\smallskip}
  \end{tabular}}
  \vspace{-12pt}
\end{table}

\section{Conclusion}

In this paper, we have considered the reactive realizability problem for specifications given in the temporal logic $\hyperltl$.
We gave a complete characterization of the decidable fragments based on the quantifier prefix and, additionally, identified a decidable fragment in the, in general undecidable, universal fragment of $\hyperltl$.
Furthermore, we presented two algorithms to detect realizable and unrealizable $\hyperltl$ specifications, one based on bounding the system implementation and one based on bounding the number of counterexample paths.
Our prototype implementation shows that our approach is able to synthesize systems with complex information-flow properties.

\bibliographystyle{splncs03}
\bibliography{main}

\begin{thebibliography}{10}
\providecommand{\url}[1]{\texttt{#1}}
\providecommand{\urlprefix}{URL }

\bibitem{conf/csfw/AgrawalB16}
Agrawal, S., Bonakdarpour, B.: Runtime verification of k-safety hyperproperties
  in {HyperLTL}. In: Proceedings of {CSF}. pp. 239--252. {IEEE} Computer
  Society (2016)

\bibitem{conf/podc/Brewer00}
Brewer, E.A.: Towards robust distributed systems (abstract). In: Proceedings of
  {ACM}. p.~7. ACM (2000)

\bibitem{journals/cacm/Chaum85}
Chaum, D.: Security without identification: Transaction systems to make big
  brother obsolete. Commun. {ACM}  28(10),  1030--1044 (1985)

\bibitem{conf/post/ClarksonFKMRS14}
Clarkson, M.R., Finkbeiner, B., Koleini, M., Micinski, K.K., Rabe, M.N.,
  S{\'{a}}nchez, C.: Temporal logics for hyperproperties. In: Proceedings of
  {POST}. LNCS, vol. 8414, pp. 265--284. Springer (2014)

\bibitem{journals/jcs/ClarksonS10}
Clarkson, M.R., Schneider, F.B.: Hyperproperties. Journal of Computer Security
  18(6),  1157--1210 (2010)

\bibitem{conf/vmcai/DimitrovaFKRS12}
Dimitrova, R., Finkbeiner, B., Kov{\'{a}}cs, M., Rabe, M.N., Seidl, H.: Model
  checking information flow in reactive systems. In: Proceedings of {VMCAI}.
  LNCS, vol. 7148, pp. 169--185. Springer (2012)

\bibitem{conf/tacas/FaymonvilleFRT17}
Faymonville, P., Finkbeiner, B., Rabe, M.N., Tentrup, L.: Encodings of bounded
  synthesis. In: Proceedings of {TACAS}. LNCS, vol. 10205, pp. 354--370 (2017)

\bibitem{conf/cav/FaymonvilleFT17}
Faymonville, P., Finkbeiner, B., Tentrup, L.: {BoSy}: An experimentation
  framework for bounded synthesis. In: Proceedings of {CAV}. LNCS, vol. 10427,
  pp. 325--332. Springer (2017)

\bibitem{conf/concur/FinkbeinerH16}
Finkbeiner, B., Hahn, C.: Deciding hyperproperties. In: Proceedings of
  {CONCUR}. LIPIcs, vol.~59, pp. 13:1--13:14. Schloss Dagstuhl -
  Leibniz-Zentrum fuer Informatik (2016)

\bibitem{conf/rv/FinkbeinerHST17}
Finkbeiner, B., Hahn, C., Stenger, M., Tentrup, L.: Monitoring hyperproperties.
  In: Proceedings of {RV}. LNCS, vol. 10548, pp. 190--207. Springer (2017)

\bibitem{conf/vmcai/FinkbeinerJ12}
Finkbeiner, B., Jacobs, S.: Lazy synthesis. In: Proceedings of {VMCAI}. LNCS,
  vol. 7148, pp. 219--234. Springer (2012)

\bibitem{conf/cav/FinkbeinerRS15}
Finkbeiner, B., Rabe, M.N., S{\'{a}}nchez, C.: Algorithms for model checking
  {HyperLTL} and {HyperCTL*}. In: Proceedings of {CAV}. LNCS, vol. 9206, pp.
  30--48. Springer (2015)

\bibitem{conf/lics/FinkbeinerS05}
Finkbeiner, B., Schewe, S.: Uniform distributed synthesis. In: Proceedings of
  {LICS}. pp. 321--330. {IEEE} Computer Society (2005)

\bibitem{journals/sttt/FinkbeinerS13}
Finkbeiner, B., Schewe, S.: Bounded synthesis. STTT  15(5-6),  519--539 (2013)

\bibitem{conf/tacas/FinkbeinerT14}
Finkbeiner, B., Tentrup, L.: Detecting unrealizable specifications of
  distributed systems. In: Proceedings of {TACAS}. LNCS, vol. 8413, pp. 78--92.
  Springer (2014)

\bibitem{journals/corr/FinkbeinerT15}
Finkbeiner, B., Tentrup, L.: Detecting unrealizability of distributed
  fault-tolerant systems. Logical Methods in Computer Science  11(3) (2015)

\bibitem{conf/stacs/Finkbeiner017}
Finkbeiner, B., Zimmermann, M.: The first-order logic of hyperproperties. In:
  Proceedings of {STACS}. LIPIcs, vol.~66, pp. 30:1--30:14. Schloss Dagstuhl -
  Leibniz-Zentrum fuer Informatik (2017)

\bibitem{conf/focs/KupfermanV05}
Kupferman, O., Vardi, M.Y.: Safraless decision procedures. In: Proceedings of
  {FOCS}. pp. 531--542. {IEEE} Computer Society (2005)

\bibitem{books/daglib/0080029}
Manna, Z., Pnueli, A.: Temporal verification of reactive systems - safety.
  Springer (1995)

\bibitem{conf/tacas/MouraB08}
de~Moura, L.M., Bj{\o}rner, N.: {Z3:} an efficient {SMT} solver. In:
  Proceedings of {TACAS}. LNCS, vol. 4963, pp. 337--340. Springer (2008)

\bibitem{conf/focs/PnueliR90}
Pnueli, A., Rosner, R.: Distributed reactive systems are hard to synthesize.
  In: Proceedings of {FOCS}. pp. 746--757. {IEEE} Computer Society (1990)

\bibitem{post1946variant}
Post, E.L.: A variant of a recursively unsolvable problem. Bulletin of the
  American Mathematical Society  52(4),  264--268 (1946)

\end{thebibliography}


\end{document}